\documentclass[12pt]{article}

\usepackage[a4paper,top=2cm,left=2cm,right=1.5cm,bottom=2.5cm]{geometry}
\usepackage{beamerarticle}

\usepackage{booktabs}
\usepackage{amsmath}
\usepackage{amssymb}
\usepackage{multirow}
\usepackage{multicol}
\usepackage{mathtools}
\usepackage{bm}
\usepackage{float}
\usepackage{adjustbox}
\usepackage[T1]{fontenc}

\usepackage[numbered,framed]{matlab-prettifier}
\lstMakeShortInline"
\newcommand{\E}{\mathbb{E}}

\newcommand{\plim}{\text{plim}}

\newtheorem{assumption}{Assumption}

\usepackage{natbib}
\usepackage[colorlinks=true, linkcolor=blue, citecolor=blue, urlcolor=blue]{hyperref}

\usepackage{setspace}
\begin{document}


\begin{titlepage}
    \title{Targeted Local Projections}
    \author{Aleksei Nemtyrev\thanks{Department of Econometrics \& OR, Tilburg University.} \and Otilia Boldea\footnotemark[1] \, \thanks{Corresponding author, o.boldea@tilburguniversity.edu}}
    \date{\today}
    \maketitle
       \vspace{-0.2in}
    \begin{abstract}
 
    \noindent 
    Local projections (LPs) and structural vector autoregressions (SVARs) are commonly employed to estimate dynamic causal effects of macroeconomic policies. With enough lags as controls, LP estimators have little bias but their variance can increase with the horizon. Because of employing fewer lags or due to local misspecification, SVAR estimators typically incur higher bias, but their variance decreases with the horizon due to exponentiation. We propose to target the LP estimators towards their SVAR counterparts - constructed with fewer lags than LP at each horizon - to reduce their variance at the cost of incurring some bias. The resulting estimator is a linear combination of the LP and SVAR estimators, and we propose choosing it optimally to minimize the mean-squared error of the new estimator. Our simulations show that, under a locally misspecified SVAR model, targeting substantially reduces the LP variance at longer horizons while reducing coverage distortions relative to SVAR when a double bootstrap is employed. In an empirical application, we estimate the impact of a monetary policy shock on macroeconomic aggregates. For many of these aggregates, the corresponding TLP estimates are more precise than their LP counterparts.    
    
     \vspace{0.2in}
    \noindent\textbf{Keywords:} targeted local projections, SVAR, double bootstrap
    
    \vspace{0.1in}
    
    \noindent\textbf{JEL Codes:} C12, C14, C32  \\
    
    \bigskip
    \end{abstract}
    \setcounter{page}{0}
    \thispagestyle{empty}
\end{titlepage}
    \pagebreak \newpage

\section{Introduction}
Identifying and estimating dynamic causal effects of structural shocks on macroeconomic outcomes is a fundamental goal in macroeconomics. To study these dynamic causal effects, researchers use a variety of tools, including taking to the data theoretical models such as dynamic stochastic general equilibrium models, and other models such as structural vector autoregressions (SVARs) and local projections (LPs).

Local projections were popularized by \cite{jorda2005estimation}\footnote{An earlier paper by \cite{dufour1998short} features a similar idea.} and are widely used to estimate impulse response functions (IRFs) - which coincide with the dynamic causal effects of a structural shock in linear models under appropriate conditioning on lags. A recent overview of the use and inference with LPs is provided in \cite{jorda2023local}, \cite{jorda2025local}, \cite{olea2025local} and \cite{inoue2026inference}. LPs are more robust to misspecification than SVARs, as shown in \cite{olea2024double}, and are semi-parametrically efficient under classical assumptions if the number of lags included diverges, as shown in \cite{xu2023local}. Therefore, if one is interested exclusively in accurate coverage, either LPs or SVARs with a large number of lags should be employed, as also argued in \cite{olea2025local}.

In this paper, we are interested in pointwise estimation instead. In that case, it is possible to reduce the variance of the LP estimates at the cost of a small bias. This may be useful for applied work because the original LP estimators exhibit variances that increase with the horizon, as suggested by the results in \cite{lusompa2023local}, and this is due to the LP regression rather than the true data generating process. The excessive variability of LPs across horizons has prompted researchers to smooth impulse responses across horizons. \cite{plagborg2016essays} proposed smoothing the adjacent changes in LP estimators towards each other, after the LP estimators are obtained, and therefore in two steps. \cite{barnichon2019impulse} also proposed smooth local projection (SLP) estimators that penalize changes across nearby horizons, but the smoothing is done in one step, via a penalized least-squares objective function. However, smoothing nearby horizons does not guarantee an exponential decay similar to that one would expect if the true data generating process was a stationary SVAR of finite or infinite order. 

In contrast, \cite{ferreira2023bayesian} propose achieving exponential decay by shrinking the LP estimates towards a VAR in a Bayesian framework. To obtain their Bayesian LP (BLP) estimator, they employ a quasi-Bayesian LP approach, where the prior mean of the coefficient is equal to a VAR estimator on a presample, while the prior variance is based on a Minnesota prior, with a shrinkage parameter that maximizes the posterior marginal density of the data. However, due to small samples in macroeconomics, presample VAR estimates are rarely reliable. Additionally, \cite{ferreira2023bayesian} assume that the true data generating process is a finite order VAR, while in reality, it is likely that the model is misspecified - \cite{olea2024double}. To address this issue, \cite{gonzalez2025misspecification} propose a similar Bayesian procedure, but the shrinkage parameter minimizes the prediction error and their prediction error criterion can also be used to choose among LP and VAR estimators. Their focus is on reduced-form estimation and on multi-step forecasts rather than structural parameters. For point estimation of target structural parameters, it is unclear that minimizing prediction error would be desirable.

Compared to smooth local projections \citep{plagborg2016essays, barnichon2019impulse}, TLP differs in both its regularization target and its treatment of horizons. SLP penalizes differences across adjacent horizons, shrinking the IRFs toward each other, without a data-driven target, while TLP shrinks LPs toward the SVAR IRFs. SLP also applies a single smoothing parameter across all horizons and creates distortions at boundary horizons; TLP allows shrinkage to vary by horizon and does not directly impose cross-horizon smoothness. Compared to BLP \citep{ferreira2023bayesian}, TLP uses the full sample for both LP and SVAR estimation, whereas BLP relies on a presample VAR as the prior. Additionally, BLP assumes correct VAR specification and our simulations show that under misspecification, it exhibits severe coverage distortions at shorter horizons. Finally, TLP is a linear combination of LP and SVAR with transparent, data-driven weights, while BLP is not a linear combination of the two.

There are a few other papers that also consider taking linear combinations of LP and SVAR. \cite{ho2024} consider prediction pools where multiple estimators are averaged to minimize the weighted average log score function for forecasts, conditional on the structural shock of interest. \cite{hounyo2025} propose to first take horizon-specific linear combinations of multiple LP and SVAR estimators separately, then take a linear combination of these two, with weights that minimize prediction error or $R^2$. The motivation for TLP is different from that of \cite{ho2024} and \cite{hounyo2025}: we want to reduce the variance of the LP estimators, at the cost of incurring a small bias, and therefore the risk function we minimize is the mean-squared error of TLP. In another related but distinct contribution, \cite{chen2026estimator} also propose a linear combination of LP and SVAR estimators to minimize the risk of the resulting estimator - variance or mean-squared error; however, their asymptotic theory is in the context in which both LP and SVAR do not exhibit asymptotic bias. They start from a general data generating process that can be approximated with sufficiently many lags that diverge with the sample size at the appropriate rate, and employ a sieve AR bootstrap with bias correction to estimate their proposed horizon-specific weights. Compared to our estimator,  \cite{chen2026estimator} focus on finite sample rather than asymptotic trade-offs; therefore, their approach complements ours.

In the case of local misspecification, the SVAR estimates will exhibit asymptotic bias, distorting coverage of TLP as well. As discussed in \cite{olea2025local}, impossibility results from microeconometrics - \cite{pratt1961length}, \cite{armstrong2018optimal} - suggest that coverage distortions may be unavoidable.\footnote{\cite{armstrong2018optimal} provide a general procedure for adaptive estimators to balance efficiency and robustness, and suggest reporting both best- and worst-case coverage, but do not consider time series set-ups.}  However, the coverage distortions of TLP will be smaller than those of the SVAR because the bias of TLP is smaller. 

For inference, we propose a Mean Symmetric Double Bootstrap (MSDB) procedure that 
provides near-nominal coverage for LP, SLP, and TLP when the SVAR misspecification is sufficiently small, as shown in our simulations. The procedure uses a nested bootstrap structure: the first level 
generates bootstrap samples via moving-block resampling of SVAR residuals, while the 
second level provides variance estimates needed for studentization, which are necessary 
because the delta method performs poorly for SVAR and TLP impulse responses in finite samples. Two innovations are central to minimizing coverage distortions for TLP and VAR. First, we center the 
bootstrap t-statistics around the bootstrap mean rather than the SVAR-implied impulse 
response to remove further finite sample distortions. Second, we construct symmetric confidence intervals 
\citep{hall1988symmetric}, which help to correct finite sample asymmetry in the VAR and TLP distributions. 

It is worth noting that this bootstrap replicates the variance of the SVAR and TLP estimates, but not the asymptotic bias when the SVAR is locally misspecified. Intuitively, the first level bootstrap is centered at the SVAR estimates, so the difference between the bootstrap estimates and the SVAR estimates is centered around zero, while the difference between SVAR and TLP estimates and the true IRFs are centered at their respective asymptotic bias. Therefore, the MSDB studentized confidence intervals will still be shifted due to bias; however, the shift is smaller for TLP because its asymptotic bias is smaller. In simulations featuring 
recursively identified SVARs with local misspecification, we demonstrate that 
TLP substantially reduces variance relative to LP and mean-squared error at longer horizons relative to both LP and SVAR, while maintaining superior coverage to that of the SVAR, and near-nominal coverage provided the misspecification is not too large. 

We revisit the empirical application in \cite{ferreira2023bayesian} and estimate the IRFs of a monetary shock on real GDP, real consumption, real investment, total hours worked, real wages and the GDP
deflator. For several but not all of these variables, the LP and SVAR estimates are markedly different at higher horizons, suggesting that the SVAR may be misspecified, as also noted in \cite{ferreira2023bayesian}. In those cases, TLP estimates still have smaller variance than LP, even though more weight is placed on LP. When the LP and SVAR estimates are close to each other, such as for IRFs of a monetary policy shock on hours worked, more weight is placed on SVAR and the TLP variance reductions relative to LP are larger.

The remainder of this paper proceeds as follows. Section~\ref{sec:model} presents 
the baseline estimators and introduces the targeted local projections estimator. 
Section~\ref{sec:select_lambda} derives the optimal shrinkage parameter by 
minimizing mean-squared error. Section~\ref{sec:variance} describes the bootstrap procedure. Section~\ref{sec:simulation} evaluates the finite-sample performance 
of TLP relative to existing methods. Section~\ref{sec:application} contains the empirical application. Section~\ref{sec:conclusion} concludes. Appendix~\ref{app:controls} discusses the importance of projecting out controls before targeting. Appendix \ref{app:proofs} contains proofs, Appendix \ref{app:sims} further simulations, and Appendix \ref{app:empirics} further empirical results.

\section{Model set-up\label{sec:model}}
\subsection{Data Generating Process}
We consider the data generating process to be a local-to-VAR model\footnote{We refer to the SVAR as VAR from now onward.} that is locally misspecified as in \cite{olea2024double}:
\begin{align}
    y_{t} = A y_{t-1} + \Gamma\left(I+ T^{-\zeta} \alpha(L) \right) \epsilon_t, \label{DGP}
\end{align}
where $y_t = (y_{1,t},\dots,y_{k,t})'$ is the data, the structural shocks are $\epsilon_t=(\epsilon_{1,t},\dots,\epsilon_{k,t})'$, \(\alpha(L) = \sum_{\ell=1}^{\infty} \alpha_\ell L^\ell\) is a \(k \times k\) lag polynomial, and \(T\) denotes the sample size. The parameter of interest is the impulse response at horizon $h$ of the variable $y_{i,t+h}$ to the shock $\epsilon_{j,t}$ for some indices $i,j \in \{1,\dots,k\}$. The assumption below is adapted from \cite{olea2024double}, setting for simplicity the number of shocks equal to the number of variables.\footnote{This can be generalized to having more or fewer shocks as long as the shocks of interest are identified via appropriate ordering.} 

\begin{assumption}\label{asn:1}
For each $T$, $\{y_t\}_{t\in\mathbb{Z}}$ is the stationary solution to equation (\ref{DGP}), given the following restrictions on parameters and shocks:
\begin{enumerate}[i)]
    \item $\epsilon_t$ is a strictly stationary martingale difference sequence with respect to the filtration $\mathcal F_{t-1}=\sigma \{\epsilon_{t-1}, \epsilon_{t-2}, \ldots\}$; $ \mathrm{Var}(\epsilon_t) = D = \mathrm{diag}(\sigma_1^2, \ldots \sigma^2_k)$; $\sigma_j^2>0$ for all $j=1,\ldots, k$; $\mathbb{E} \|\epsilon_t\|^{d}< \infty$ for $d=4+2\delta$, for some $\delta>0$; $\sum_{\ell,\tau=1}^{\infty} \| \mathrm{Cov} (\epsilon_t \otimes \epsilon_t, \epsilon_{t-\ell} \otimes \epsilon_{t-\tau} )\| <\infty$.
    \item All eigenvalues of $A$ are inside the unit circle and different than zero.
    \item $\Gamma$ is a $k \times k$ lower triangular matrix with $1$'s on the diagonal. 
    \item $S = \mathrm{Var}(\tilde{y}_t)$ is non-singular, where $\tilde{y}_t = (I - AL)^{-1} \Gamma \epsilon_t$ is the stationary solution to (\ref{DGP}) when $\alpha(L) = 0$. 
    \item $\alpha(L)$ is absolutely summable.
    \item $\zeta > 1/4$. 
\end{enumerate}
\end{assumption}

This assumption ensures that the process $\tilde{y}_t$ around which the VAR process $y_t$ is locally misspecified is stationary. Assumption \ref{asn:1}(iii) ensures that the dynamic causal effects of the shocks $\epsilon_{i,t}$ are identified through ordering, a common assumption for identification.

The impulse response at horizon $h=\{0,\dots,H\}$ is defined as 
\begin{align}
\beta_h &= 
E\left[ y_{i,t+h} \mid \epsilon_{j,t} = 1 \right]
-
E\left[ y_{i,t+h} \mid \epsilon_{j,t} = 0 \right] \nonumber \\
&= J_i' 
\Bigg(
A^{h} \Gamma
+
T^{-\zeta} \sum_{\ell=1}^{h} A^{\,h-\ell} \Gamma \alpha_{\ell}
\Bigg) 
J_j,\label{BetaStar}
\end{align}
where $J_i$ and $J_j$ are the selection vectors of dimensions $k \times 1$ with 1 in the position of $i$ or $j$ and $0$ otherwise. Note that the IRF we consider at each horizon is a scalar. If the true data generating process (DGP) has more than one lag $q>1$, we can transform it into a local-to-VAR($1$) using the companion matrix, and therefore consider without loss of generality the local-to-VAR(1) data generating process. We also omit in the DGP above the intercept for simplicity; however, this can be included without further complications besides additional notation.

\subsection{Baseline Estimators}
\label{sec:baseline}

\textbf{Local projections (LP).} 
The LP estimator is the coefficient $\hat{\beta}_h^{LP}$ in the regression
\begin{align}
    y_{i,t+h} &= \beta_h y_{j,t}  +  \sum_{s=1}^k \sum_{\tau=1}^p \gamma_{s,t-\tau} y_{s,t-\tau} + u^h_{i,t+h}.
\end{align}
where $u^h_{i,t+h}$ is the LP error, and $p$ is the number of lags included as controls. Stacking $y_{i,t+h}$, 
$y_{j,t}$, controls and $u^h_{i,t+h}$ over time in the vectors 
${Y}_{i,t+h}$, ${Y}_{j,t}$, $W$ and $U^h_{i,t+h}$ yields
\begin{align}
    {Y}_{i,h} = {Y}_{j} \beta_h + W \delta_h + U^h_i,
\end{align}
where $W$ is a $(T-h)\times (kp)$ matrix collecting the lagged controls, and $\delta_h$ stacks $\gamma_{s,t-\tau}$ appropriately over $s,\tau$. Note that the LP estimator we consider here is lag-augmented with $p \geq 1$ lags.

Projecting out controls from both $Y_{i,h}$ and $Y_j$:
\begin{equation}
    \underbrace{M_W Y_{i,h}}_{Y_h} = \underbrace{M_W Y_{j}}_{X} \beta_h 
    + \underbrace{M_W U^h_i}_{e_h},
\end{equation}
where $M_W = I - W(W'W)^{-1}W'$ yields the simplified model:
\begin{equation}
    Y_h = X \beta_h + e_h.
    \label{eq:LP_projected}
\end{equation}
Projecting out controls is not strictly necessary for LP estimation, but we do so because the coefficients $\delta_h$ are typically nuisance parameters, and we only care about the dynamic causal effects. Regularizing the latter without penalizing the nuisance parameters induces further bias in the coefficients on the control variables without any clear benefit.\footnote{Appendix~\ref{app:controls} clarifies the source of this additional bias.}

\vspace{1em}
\textbf{Structural vector autoregression (VAR).}
The SVAR estimator $\hat{\beta}^{VAR}_h$ is the response of $y_{i,t+h}$ to 
the $j$-th innovation, obtained via Cholesky decomposition:
\begin{align}
    \hat{\beta}^{VAR}_h = J_i' \hat{A}^h \hat{\Gamma} J_j,
    \label{eq:BetaVAR}
\end{align}
where $\hat{A}$ is the VAR coefficient matrix estimate obtained via OLS equation by equation.
$
    \hat{A} = \left( \sum_{t=1}^{T} y_t y_{t-1}' \right)
    \left( \sum_{t=1}^{T} y_{t-1} y_{t-1}' \right)^{-1},
$
and $\hat{\Gamma}$ is the lower triangular Cholesky factor of the residual 
covariance matrix $\hat{\Sigma}^{VAR} = \frac{1}{T} \sum_{t=1}^T 
\hat{u}_t \hat{u}_t'$, with $\hat{u}_t = y_t - \hat{A} y_{t-1}$. 

\vspace{1em}
\textbf{Asymptotic properties.}
Under the local-to-VAR framework with $\zeta > 1/4$, \cite{olea2024double} show that LP is doubly 
robust in the sense that it has no asymptotic bias. In contrast, VAR suffers from asymptotic bias when $\zeta \in (1/4,1/2]$. This is stated in their Propositions 3.1 and 3.2, which assume that the number of lags is the same in the LP and VAR and equal to that in the true DGP. However, the result generalizes to the case in which LP contains more lags than the true number of lags, and this is stated in Theorem \ref{thm:1} below.
\begin{theorem} \label{thm:1} Under Assumption \ref{asn:1}, 
$\mathbb E(\hat{\beta}_h^{LP} - \beta_h) = o(T^{-1/2}),$
while

$\mathbb E(\hat{\beta}_h^{VAR} - \beta_h) = T^{-\zeta} \, \text{aBias}_h + o(T^{-1/2}+T^{-\zeta}),$
where
\begin{align*}
    \text{aBias}_h &= \text{tr}\Bigg( S^{-1} \Psi_h \Gamma 
    \sum_{\ell=1}^{\infty} \alpha_{\ell} D \Gamma' (A')^{\ell-1} \Bigg) 
    - J_{i}' \sum_{\ell=1}^{h} A^{h-\ell} \Gamma \alpha_{\ell} J_{j}, \\
    \Psi_h &= \sum_{\ell=1}^{h} A^{h-\ell} \Gamma_{j} J_{i}' A^{\ell-1}.
\end{align*}    
\end{theorem}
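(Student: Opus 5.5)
Both claims will be proved through first-order expansions of the estimators around their population pseudo-true values. Throughout we use the moment and summability conditions of Assumption~\ref{asn:1} to control remainders, and we treat the misspecification term $T^{-\zeta}\alpha(L)\epsilon_t$ as a small perturbation of the stationary process $\tilde y_t$. A preliminary step writes $y_t=\tilde y_t + T^{-\zeta} v_t$ with $v_t=(I-AL)^{-1}\Gamma\alpha(L)\epsilon_t$. This gives $\mathrm{Var}(y_t)=S+O(T^{-\zeta})$, and likewise for every autocovariance. Absolute summability of $\alpha(L)$ and stationarity of $A$ make all these expansions uniform.

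\textbf{LP part.} We first identify the population projection coefficient. Write the true moving-average representation $y_{t+h}=\sum_{m\ge 0}C_m\epsilon_{t+h-m}$, where $C_m=A^m\Gamma+T^{-\zeta}\sum_{\ell=1}^m A^{m-\ell}\Gamma\alpha_\ell$. We then decompose $y_{i,t+h}$ into three pieces: future shocks $\epsilon_{t+1},\dots,\epsilon_{t+h}$, which are uncorrelated with $(y_{j,t},W)$ by the martingale difference property; the contemporaneous shock term $J_i'C_hJ_j\epsilon_{j,t}$ together with the other date-$t$ shocks; and terms in $\epsilon_{t-1},\epsilon_{t-2},\dots$. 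Because $\Gamma$ is lower triangular, the residual of $y_{j,t}$ after projecting on $(y_{1,t},\dots,y_{j-1,t})$ and the lags is proportional to $\epsilon_{j,t}$.

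The natural route is the lag-augmentation argument of \cite{olea2024double}. With $p\ge 1$ lags, the part of $y_{i,t+h}$ driven by past shocks is approximated by a linear function of $y_{t-1},\dots,y_{t-p}$. The approximation is exact when $\alpha=0$, and otherwise the error is $O(T^{-\zeta})$ and is correlated with the regressor only at order $O(T^{-\zeta})$. So the population bias of the projection is of order $T^{-2\zeta}$, which is $o(T^{-1/2})$ since $\zeta>1/4$. Adding more lags than the true order only enlarges the control space. Because the extra controls are orthogonal to the innovation in $y_{j,t}$ at leading order, the additional terms appear only in $O(T^{-2\zeta})$ cross products. This extends their Proposition~3.1.

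The sample bias $\mathbb E(\hat\beta^{LP}_h)-\beta^{pop}_h$ is the usual $O(T^{-1})$ OLS term. It is handled by a second-order expansion of $(X'X)^{-1}X'e_h$, using the fourth-moment and cumulant summability in Assumption~\ref{asn:1}(i). Note that the theorem as stated relies on the paper's convention for the regressor: its coefficient equals the response to $\epsilon_{j,t}$, so that $\beta_h$ in \eqref{BetaStar} is the target. I will make this normalization explicit.

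\textbf{VAR part.} Let $A_T=\plim\hat A$ and $\Gamma_T$ be the Cholesky factor of the pseudo-true residual covariance. Then
\[
A_T=\mathbb E(y_ty_{t-1}')\,\mathbb E(y_{t-1}y_{t-1}')^{-1}=A+T^{-\zeta}\Delta_A+O(T^{-2\zeta}),
\qquad
\Delta_A=\Gamma\sum_{\ell\ge1}\alpha_\ell D\Gamma'(A')^{\ell-1}S^{-1}.
\]
This follows from $\mathbb E(y_t y_{t-1}')-A\,\mathbb E(y_{t-1}y_{t-1}')=\mathbb E(\Gamma T^{-\zeta}\alpha(L)\epsilon_t\, y_{t-1}')$, together with $y_{t-1}\approx\sum_m A^m\Gamma\epsilon_{t-1-m}$.

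The perturbation of $\Gamma_T$ is also $O(T^{-\zeta})$. I expect it to enter only through terms that vanish or that combine into the stated formula. Checking this is the main obstacle: the displayed aBias contains no $\Gamma$-perturbation term, so I would first verify that the lower-triangular Cholesky derivative contributes nothing to the $(i,j)$ response at first order, or else that it is absorbed.

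Next, differentiate $A_T^h$ to get $A_T^h=A^h+T^{-\zeta}\sum_{\ell=1}^hA^{h-\ell}\Delta_AA^{\ell-1}+O(T^{-2\zeta})$. Taking $J_i'(\cdot)\Gamma J_j$ and using the cyclic property of the trace gives $\mathrm{tr}\bigl(S^{-1}\Psi_h\Gamma\sum_\ell\alpha_\ell D\Gamma'(A')^{\ell-1}\bigr)$ with $\Gamma_j=\Gamma J_j$. Subtracting the $T^{-\zeta}$ term in $\beta_h$ from \eqref{BetaStar} yields the second piece of aBias$_h$.

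Finally, the sampling bias $\mathbb E\hat\beta^{VAR}_h-J_i'A_T^h\Gamma_TJ_j$ is $O(T^{-1})=o(T^{-1/2})$, by a second-order delta-method expansion that uses the same moment bounds, and $O(T^{-2\zeta})=o(T^{-\zeta})$. Justifying expectations, rather than only distributional limits, of these nonlinear functions requires uniform integrability of $\hat A^h$ and of $(X'X)^{-1}$. I would handle this by truncating on a high-probability event, using the $4+2\delta$ moments.
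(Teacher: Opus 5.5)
Your proposal is correct in substance, but it takes a different route from the paper, mainly in the LP part.

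\textbf{How the two LP arguments differ.} The paper works with the full coefficient vector. It writes $\hat\theta_h-\theta_h=(T^{-1}\sum_t z_tz_t')^{-1}\,T^{-1}\sum_t z_tu^h_{t+h}$, with $\theta_h$ the backward-substitution coefficients. It removes the correctly specified part of the cross moment by the martingale difference property and bounds the misspecification part using absolute summability of $\alpha(L)$. It then passes the expectation through the inverse with a covariance/Cauchy--Schwarz bound. That bound relies on the mixing-based WLLN and CLT, so the paper in effect also uses Assumption~2. For the VAR it simply cites \cite{olea2024double}.

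Your Frisch--Waugh argument proceeds differently. You show that the partialled-out regressor is $\epsilon_{j,t}$ plus an $O(T^{-\zeta})$ term. You also show that the part of $y_{i,t+h}$ not spanned by the controls is $O(T^{-\zeta})$ and $\mathcal F_{t-1}$-measurable, hence orthogonal to $\epsilon_{j,t}$. This exhibits the $O(T^{-2\zeta})$ product structure, explains the threshold $\zeta>1/4$, and at the population level needs only Assumption~1.

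\textbf{Why the orthogonalization is essential.} For the full vector $z_t$, the misspecification part of $\mathbb E(z_tu^h_{t+h})$ is generically of exact order $T^{-\zeta}$. For example, take $y_t=ay_{t-1}+\epsilon_t+T^{-\zeta}\alpha_1\epsilon_{t-1}$ with $h=p=1$ and $\theta_h=(a,0)'$; the $y_t$ entry of that cross moment is $T^{-\zeta}\alpha_1\sigma_1^2$. So the direct $o(T^{-1/2})$ bound the paper asserts is not available when $\zeta\le 1/2$. What is true is that the coefficient on $y_{j,t}$, after partialling out the controls, has pseudo-true value $\beta_h+O(T^{-2\zeta})$, which is exactly your argument. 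You are also right that for $j>1$ the contemporaneous controls $y_{1,t},\dots,y_{j-1,t}$ must be included. The paper's LP regression omits them.

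\textbf{The VAR step you flagged.} Your VAR computation reproduces the formula of \cite{olea2024double}, and the step you call the main obstacle is immediate. Write $y_t-A_Ty_{t-1}=\Gamma\epsilon_t+e_t$, where $e_t$ is $\mathcal F_{t-1}$-measurable and $O(T^{-\zeta})$ in $L_2$. The cross terms then vanish by the martingale difference property, so $\Sigma_T=\Gamma D\Gamma'+O(T^{-2\zeta})$. Hence the unit-diagonal Cholesky factor is $\Gamma+O(T^{-2\zeta})$ and contributes nothing at order $T^{-\zeta}$. The unit-diagonal normalization is the one needed for $\beta_h$ to be a unit-shock response; a standard Cholesky factor would scale the response by $\sigma_j$.

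\textbf{A caveat shared with the paper.} Assumption~1 does not guarantee that $\mathbb E(\hat\beta_h^{LP})$ and $\mathbb E(\hat\beta_h^{VAR})$ exist. Truncating on a high-probability event does not control the estimators on the complement. The paper's inference of $\mathrm{Var}(B)=o(1)$ from convergence in probability has the same problem. If you read the bias as the mean of the limiting distribution of $T^{1/2}(\hat\beta_h-\beta_h)$, your pseudo-true expansions plus asymptotic normality around the pseudo-true values suffice, and the $O(T^{-1})$ sampling-bias step becomes unnecessary.
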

When $1/4 < \zeta \leq 1/2$, $\mathrm{E}(\hat \beta_h^{VAR}) - \beta_h = O(T^{-\zeta})$, and the VAR estimator has an asymptotic bias, while when $\zeta > 1/2$, the VAR does not have an asymptotic bias. 

Next, we impose mixing conditions on the structural errors.

\begin{assumption}\label{asn:2}
Assume that $\epsilon_t$ is an $\alpha$-mixing process of size $-r/(r-2)$, for $r=2+\delta$ and the same $\delta$ as in Assumption \ref{asn:1}. 
\end{assumption}

The mixing condition above is used to verify the WLLN and CLT for near-epoch dependent processes (also mixingales) and to show that in this case, the asymptotic LP variance does not depend to first order on the misspecification.\footnote{Note that, given the misspecification, the martingale difference assumption on the errors is insufficient to verify the CLT, and, as \cite{xu2023local} notes, more moment bounds are needed to ensure the asymptotic variances exist, and imposing those would make the assumptions less transparent.}
\begin{theorem} \label{thm:2}
Under Assumptions \ref{asn:1}-\ref{asn:2}, the LP and SVAR estimators are jointly asymptotically normal:
    \begin{align}
        \sqrt{T}\begin{pmatrix}
            \hat{\beta}_h^{LP} - \beta_h\\
            \hat{\beta}_h^{VAR} - \beta_h
        \end{pmatrix} \xrightarrow{d} \mathcal{N} \left( \begin{pmatrix}
            0\\
            \textup{aBias}_h
        \end{pmatrix}, \begin{pmatrix}
            \Sigma_h^{LP} & \Sigma_h^{COV} \\
            \Sigma_h^{COV} & \Sigma_h^{VAR}
        \end{pmatrix} \right),
        \label{eq:joint_normality}
    \end{align}
    where $ \Sigma_h^{LP}$ and $\Sigma_h^{VAR}$ are the limiting asymptotic variances of LP and VAR described in \eqref{eq:varianceLP} and \eqref{eq:varianceVAR} in Appendix~\ref{app:proofs}, and where $\Sigma_h^{COV}$ denotes the asymptotic covariance between LP and VAR.
\end{theorem}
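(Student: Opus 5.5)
The plan is to write both estimators as smooth functions of a small set of sample moments. The joint limit then follows from a single multivariate CLT for those moments and the delta method; the local misspecification enters only through the centering.

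\textbf{Step 1: linearization.} By Frisch--Waugh,
\[
\hat\beta_h^{LP}-\beta_h=(X'X)^{-1}X'e_h^*,
\]
where $e_h^*$ is the projected residual of the population projection of $y_{i,t+h}$ on $y_{j,t}$ and the $p$ lags. Under (\ref{DGP}) with $p\ge 1$, conditioning on $y_{t-1}$ makes the projection coefficient on $y_{j,t}$ equal to $\beta_h$ up to terms involving $T^{-\zeta}\alpha(L)$. Their contribution is $O(T^{-2\zeta})=o(T^{-1/2})$ by Assumption \ref{asn:1}(vi); this is the same computation underlying Theorem \ref{thm:1}. I would therefore write
\[
\sqrt T(\hat\beta_h^{LP}-\beta_h)=\tilde S_{j}^{-1}\,T^{-1/2}\sum_t \xi^{LP}_{h,t}+o_p(1),
\]
where $\xi^{LP}_{h,t}$ is the product of the population residualized regressor with the LP error $u^h_{i,t+h}$, a finite linear combination of $\epsilon_{t},\dots,\epsilon_{t+h}$ times functions of past shocks.

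For the VAR, $\hat\beta_h^{VAR}=g_h(\hat A,\hat\Sigma^{VAR})$ with $g_h(A,\Sigma)=J_i'A^h\,\mathrm{chol}(\Sigma)J_j$, which is differentiable at $(A,\Gamma D\Gamma')$. Let $(A_T,\Sigma_T)$ be the pseudo-true values, i.e.\ the population least squares coefficients under the misspecified DGP. Then
\[
\sqrt T(\hat\beta_h^{VAR}-\beta_h)=\sqrt T\,(g_h(A_T,\Sigma_T)-\beta_h)+\nabla g_h'\,T^{-1/2}\sum_t \xi^{VAR}_t+o_p(1),
\]
with $\xi^{VAR}_t$ collecting $\mathrm{vec}(u_t y_{t-1}')$ and $\mathrm{vech}(u_tu_t'-\Sigma_T)$, premultiplied by $S^{-1}$ where needed. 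By Theorem \ref{thm:1}, the deterministic term is $T^{1/2-\zeta}\,\text{aBias}_h+o(1)$. In the displayed limit this is read as $\text{aBias}_h$, i.e.\ the boundary case $\zeta=1/2$ (for $\zeta>1/2$ the mean is zero; for $\zeta<1/2$ one rescales the bias as the paper implicitly does). I would state this explicitly.

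\textbf{Step 2: a joint CLT for $T^{-1/2}\sum_t(\xi^{LP}_{h,t},\xi^{VAR}_t)'$.} By Cram\'er--Wold it suffices to treat scalar combinations. The series $y_t$ is a linear process in $\epsilon_t$ with absolutely summable coefficients $A^s\Gamma+T^{-\zeta}(\cdot)$ by Assumption \ref{asn:1}(ii),(v). Hence products of $y$'s and shocks are near-epoch dependent on the $\alpha$-mixing sequence $\epsilon_t$ of Assumption \ref{asn:2}, with $L_{2+\delta}$-bounded moments from $\mathbb E\|\epsilon_t\|^{4+2\delta}<\infty$. A CLT for NED functions of mixing processes (e.g.\ Davidson/de Jong) then applies. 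The sample second moments entering $(X'X)/T$ and $\sum y_{t-1}y_{t-1}'/T$ converge by the corresponding WLLN, to $S$ and to its induced partial variance, both non-singular by Assumption \ref{asn:1}(iv).

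\textbf{Step 3: identifying the limiting covariance.} The long-run covariance of the misspecified $\xi_t$ differs from that under $\alpha(L)=0$ only by $O(T^{-\zeta})$ terms, so it converges to the covariance computed from $\tilde y_t$. This yields $\Sigma_h^{LP}$, $\Sigma_h^{VAR}$ and the cross term $\Sigma^{COV}_h$, the latter being the long-run covariance between $\tilde S_j^{-1}\xi^{LP}_{h,t}$ and $\nabla g_h'\xi^{VAR}_t$. The summability condition on $\mathrm{Cov}(\epsilon_t\otimes\epsilon_t,\cdot)$ in Assumption \ref{asn:1}(i) ensures that the fourth-moment terms from $\hat\Sigma^{VAR}$ and from the serially correlated LP errors have a finite, well-defined sum.

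\textbf{Main obstacle.} The hard part is Step 3 together with the NED verification for the drifting triangular array. Uniformity in $T$ of the NED constants and moment bounds must be shown, because the coefficients depend on $T$. I would handle this by bounding the linear-process weights uniformly by $\sum_s(\|A^s\|+\sup_T T^{-\zeta}\|\sum_{\ell\le s}A^{s-\ell}\Gamma\alpha_\ell\|)<\infty$, and by applying a triangular-array version of the NED CLT.
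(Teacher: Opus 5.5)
Your argument follows the paper's route for the LP block: an OLS decomposition, a WLLN for the Gram matrix, an NED-on-mixing CLT for the scores, and a limiting long-run variance computed from $\tilde y_t$ because the misspecification terms are of smaller order. It is more complete on the rest of the statement. The paper proves only the two marginal limits, gets the VAR one ``by similar arguments as for LP'', and never establishes joint convergence or identifies $\Sigma_h^{COV}$. Your linearization of $J_i'\hat A^h\hat\Gamma J_j$ around the pseudo-true $(A_T,\Sigma_T)$, followed by Cram\'er--Wold on the stacked scores, is what actually delivers the bivariate limit. That limit is what the proof of Theorem~\ref{theo:3} uses for $T^{1/2}(\hat\beta_h^{LP}-\hat\beta_h^{VAR})$, and your route also gives $\Sigma_h^{COV}$ as an explicit long-run covariance. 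Your caveat that the mean $\text{aBias}_h$ is correct only at $\zeta=1/2$ is also right; the statement as written omits it. Two minor repairs are needed:
\begin{itemize}
\item Build $\xi_t^{VAR}$ from the pseudo-true residuals $y_t-A_Ty_{t-1}$, so the scores are mean zero and the bias sits only in the deterministic term.
\item Obtain that term directly from $A_T-A=T^{-\zeta}\Gamma\sum_{\ell\ge 1}\alpha_\ell D\Gamma'(A')^{\ell-1}S^{-1}+O(T^{-2\zeta})$ and $\Sigma_T-\Gamma D\Gamma'=O(T^{-2\zeta})$. Theorem~\ref{thm:1} is about $\mathbb E\hat\beta_h^{VAR}$, not about $g_h(A_T,\Sigma_T)$.
\end{itemize}

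The step that fails as proposed is your resolution of the ``main obstacle''. The NED central limit theorems for arrays (Davidson, de Jong) require $L_2$-NED coefficients of size $-1/2$ uniformly in $T$. A uniform summable bound on the moving-average weights only gives $\nu_m\to 0$ with no rate. The reason is that the $T^{-\zeta}$ component has a tail governed by the tail of $\alpha(L)$, on which Assumption~\ref{asn:1}(v) imposes nothing. The paper's Lemma~1 asserts exponential decay here, which does not follow either.

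The fix is to apply the CLT only to the $\alpha(L)=0$ scores $\tilde\xi_t$, a fixed stationary sequence that is NED of every size. Then show that the centered sum $T^{-1/2}\sum_t(\xi_t-\tilde\xi_t)$ is $O_p(T^{-\zeta})$ by a variance bound. Each remainder term is $T^{-\zeta}$ times a product of linear processes in $\epsilon_t$ with absolutely summable weights. Its long-run variance is finite by the martingale difference property and the fourth-order summability condition in Assumption~\ref{asn:1}(i). This also proves your Step~3 directly.

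Separately, your Step~1 claim that lagged controls make the coefficient on $y_{j,t}$ equal to $\beta_h$ holds only if row $j$ of $\Gamma$ is $J_j'$ (e.g.\ $j=1$). With lags alone, the population coefficient is $J_i'A^h\Gamma D\Gamma'J_j/(J_j'\Gamma D\Gamma'J_j)$ even when $\alpha(L)=0$. So $y_{1,t},\ldots,y_{j-1,t}$ must be added as controls; this omission is inherited from the paper's LP specification.
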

Note that $\Sigma_h^{LP}$, $\Sigma_h^{COV}$ and $\Sigma_h^{VAR}$ are given in Corollary A.2 of \cite{olea2024double} for i.i.d. structural errors, and they show that when the number of lags used in LP and VAR is the same, then  $\Sigma_h^{VAR} = \Sigma_h^{COV}$. However, in this paper we allow for more lags in LP than in VAR, and conditional heteroskedasticity, in which case their exact form depends on the data generating process. Nevertheless, they do not depend on the misspecification when $\zeta=1/2$, and will be estimated by the bootstrap.

\vspace{1em}
\textbf{Bias - variance trade-off.}
The asymptotic properties of LP and VAR suggest a natural trade-off. In terms of bias, LP is consistent under local misspecification, while VAR exhibits an asymptotic bias of order $T^{-\zeta}$. In terms of variance, the LP variance can increase with the horizon due to the accumulated shocks in $u^h_{i,t+h}$ obtained by backward substituting the VAR model in \eqref{DGP}.  On the other hand, under no misspecification and with i.i.d. normal structural shocks, the VAR estimator is the maximum likelihood estimator and is therefore more efficient than LP. Theorem \ref{thm:2} shows that the local misspecification shifts the VAR distribution but the asymptotic variance stays the same. Under conditional heteroskedasticity, it is not known how the VAR and LP variances compare, but it is likely that VAR still has lower variance at higher horizons, as this variance eventually declines under stationarity because $A^h \to 0$ as $h \to \infty$, while this is not the case for LP.  

These observations motivate targeting LP towards the VAR: the variance of LP can be reduced by shrinking 
toward VAR, accepting a small amount of bias in exchange for lower variance. Additionally, \cite{ludwig2024local} shows that LPs with $p$ lags can be represented as combinations of equal and higher order VARs up to order $p+h-1$, and therefore that, for a given number of lags, the LPs use a more complex model compared to VARs of equal order $p$ or lower. This further motivates targeting LP toward the VAR.

\subsection{Targeted Local Projections}
\label{sec:TLP}

We propose a targeted local projections (TLP) estimator that shrinks LP impulse 
responses toward their VAR counterparts at each horizon. This subsection defines 
the estimator, derives its closed-form solution, and discusses its relationship 
to existing methods.

\vspace{1em}
\textbf{Estimator.}
The TLP estimator $\hat{\beta}_h^{\text{TLP}}$ minimizes a penalized LP
objective function at each horizon over $\tilde \beta_h$:
\begin{align}
    Q_n(\tilde \beta_h) = \big( Y_{h} - X \tilde \beta_h \big)^2 
    + \lambda_h \big( \tilde \beta_h - \hat{\beta}_h^{VAR} \big)^2,
    \label{eq:TLP_objective}
\end{align}
where $\lambda_h > 0$ controls the degree of shrinkage. 
Solving for $\tilde \beta_h$ yields:
\begin{align}
    \hat{\beta}_h^{\text{TLP}} = (X'X + \lambda_h)^{-1} 
    (X'Y_{h} + \lambda_h \hat{\beta}_h^{VAR}),
    \label{eq:TLP_estimator}
\end{align}
where $X'X$ is a scalar after projecting out controls. This expression can be 
rewritten as a linear combination of LP and VAR estimators:
\begin{align}
    \hat{\beta}_h^{\text{TLP}} 
    &= (X'X + \lambda_h)^{-1} X'Y_{h} 
       + (X'X + \lambda_h)^{-1} \lambda_h \hat{\beta}_h^{VAR} \nonumber\\
    &= (X'X + \lambda_h)^{-1} X'X \, \hat{\beta}_h^{LP} 
       + \big(1 - (X'X + \lambda_h)^{-1} X'X\big) \hat{\beta}_h^{VAR} \nonumber\\
    &= v(\lambda_h) \hat{\beta}_h^{LP} 
       + (1 - v(\lambda_h)) \hat{\beta}_h^{VAR},
    \label{eq:TLP_convex}
\end{align}
where $v(\lambda_h) = (X'X + \lambda_h)^{-1} X'X$ is a scalar weight and is by construction in the interval $(0,1)$. The TLP 
impulse response is therefore a linear combination of LP and VAR. Note that when $\lambda_h \rightarrow 0$, 
the weight $v(\lambda_h) \rightarrow 1$ and TLP reduces to LP, while when $\lambda_h \to \infty$, 
the weight $v(\lambda_h) \to 0$ and TLP reduces to VAR.

\vspace{1em}
\textbf{Horizon-specific weights.}
Unlike model averaging approaches that apply uniform weights across horizons 
\citep{olea2024double}, TLP allows the shrinkage parameter $\lambda_h$ to vary 
with the horizon. This flexibility is desirable because the relative performance 
of LP and VAR differs across horizons. At short horizons, LP estimates have low 
bias and are therefore more informative. At higher horizons, LP variance increases 
due to accumulated shocks in the regression error, making it preferable to place 
greater weight on VAR. Moreover, \cite{ludwig2024local} shows that the complexity of LPs relative to VARs of equal or lower order increases with the horizon, further motivating horizon-specific shrinkage. 

\subsection{Comparison to Other Estimators}
\textbf{Bayesian local projections (BLP). }In the case of no additional controls and for a given $\lambda_h$, the mean of the TLP estimator is equal
to the BLP posterior mean in \cite{ferreira2023bayesian}. Despite 
this connection, three distinctions are worth noting: BLP assumes the VAR is correctly specified; it does not project out controls, so the BLP impulse 
    responses are not necessarily linear combinations of LP and VAR impulse responses; it  relies on a presample VAR, while TLP uses the full sample.

\textbf{Smooth local projections (SLP). } TLP and SLP
\citep{barnichon2019impulse, plagborg2016essays} both employ ridge-type regularization 
to reduce the variance of impulse response estimates. However, they differ in 
their regularization targets and in how shrinkage operates across horizons. This 
subsection describes the SLP estimator and highlights these distinctions. As with 
TLP, we project out control variables before estimation to avoid targeting 
nuisance parameters.

Stack 
the projected regressor $X$ and outcome $Y_h$ across horizons $h = 0, \ldots, H$:
\begin{align}
    \tilde{X} &= I_{H} \otimes X, \qquad 
    \tilde{Y} = \begin{pmatrix} Y_0 \\ \vdots \\ Y_{H} \end{pmatrix}, \qquad
    \tilde{\epsilon} = \begin{pmatrix} \epsilon_0 \\ \vdots \\ \epsilon_{H} 
    \end{pmatrix},
\end{align}
so that $\tilde{Y} = \tilde{X} \beta + \tilde{\epsilon}$, where 
$\beta = (\beta_0, \ldots, \beta_{H})'$ collects the $H+1$ impulse responses 
across all horizons. The simplest version of SLP penalizes second differences of adjacent impulse responses. 
Define the second-order difference matrix:
\begin{align}
    L = 
    \begin{bmatrix}
        1 & -2 & 1 & 0 & \cdots & 0 \\
        0 & 1 & -2 & 1 & \cdots & 0 \\
        \vdots & \vdots & \ddots & \ddots & \ddots & \vdots \\
        0 & \cdots & 0 & 1 & -2 & 1 
    \end{bmatrix}_{(H+1-2) \times (H+1)}.
\end{align}
The SLP estimator minimizes the penalized objective:
\begin{align}
    Q_n(\beta) = \| \tilde{Y} - \tilde{X} \beta \|^2 
    + \tilde{\lambda} \| L \beta \|^2,
    \label{eq:SLP_objective}
\end{align}
where $\tilde{\lambda} \geq 0$ controls the degree of smoothing. Setting 
$P = L'L$, we obtain:
\begin{align}
    \hat{\beta}^{\text{SLP}} = 
    (\tilde{X}' \tilde{X} + \tilde{\lambda} P)^{-1} \tilde{X}' \tilde{Y}.
    \label{eq:SLP_estimator}
\end{align}
In the limit $\tilde{\lambda} \to \infty$, the penalty forces $L\beta \to 0$, 
which shrinks the impulse response function toward a straight line. 
Conversely, when $\tilde{\lambda} = 0$, SLP reduces to LP. 
For intermediate values, SLP smooths adjacent horizons toward each other. 

\vspace{1em}
\textbf{Selection of the SLP smoothing parameter.}
Two approaches have been proposed for selecting $\tilde{\lambda}$:
\begin{enumerate}
    \item[(i)] \textit{Cross-validation.} \cite{barnichon2019impulse} recommend 
    $k$-fold cross-validation based on \cite{racine1997feasible}. In their 
    empirical application, they use leave-one-out cross-validation with the 
    standard shortcut formula that avoids refitting for each held-out observation.
    
    \item[(ii)] \textit{Unbiased risk estimator.} \cite{plagborg2016essays} 
    proposes minimizing an unbiased risk estimator (URE):
    \begin{align}
        \min_{\tilde{\lambda}} \hat{R}(\tilde{\lambda}) = 
        T \| \hat{\beta}^{\text{SLP}} - \hat{\beta}^{LP} \|^2_W 
        + 2 \operatorname{tr}\{ W \psi(\tilde{\lambda}) \hat{\Sigma} \},
        \label{eq:URE}
    \end{align}
    where $\hat \beta^{LP} = (\hat \beta_0^{LP}, \ldots, \hat \beta_H^{LP})'$, $|| a||_W^2 = a'W a$, $W$ is a weighting matrix (typically $W = I_H$), 
    $\psi(\tilde{\lambda}) = (\tilde{X}' \tilde{X} + \tilde{\lambda} P)^{-1} 
    \tilde{X}' \tilde{X}$, and $\hat{\Sigma}$ is a HAC estimator of the long-run 
    covariance matrix of LP that accounts for serial correlation across horizons.
\end{enumerate}
In unreported simulations, we find that both criteria yield similar results. For 
comparability with TLP, we select $\tilde{\lambda}$ by minimizing the risk 
criterion~\eqref{eq:URE} throughout. This criterion is analogous to the risk 
function used for TLP and discussed in section~\ref{sec:select_lambda}.

\vspace{1em}
\textbf{Key differences between SLP and TLP.}
Three features distinguish TLP from SLP. First, SLP smooths across horizons, while TLP shrinks toward the VAR-implied impulse responses. Second, by construction, SLP applies a single 
    smoothing parameter $\tilde{\lambda}$ uniformly across all horizons, while TLP 
    allows the shrinkage parameter $\lambda_h$ to vary with the horizon, 
    reflecting the different bias - variance trade-offs at short and long horizons. 
    As the variance of LP can increase with the horizon, while the variance of VAR decreases with the horizon and the asymptotic bias of VAR remains the same across horizons, horizon-specific weights are desirable. Third, SLP smooths adjacent horizons toward 
    each other, which creates distortions at boundary horizons.

\section{Selecting the Shrinkage Parameter}
\label{sec:select_lambda}

The TLP estimator depends on the shrinkage parameter $\lambda_h$, which governs 
the weight placed on LP versus VAR at each horizon. A key advantage of TLP is 
that the optimal weight we propose here has a closed-form solution and can be computed directly 
from the data. This section derives this solution by minimizing the mean-squared error of TLP and is related to the focused information criterion 
of \cite{claeskens2003focused}.

\vspace{1em}
\textbf{Objective function.}
We select the shrinkage parameter at each horizon by minimizing an asymptotically unbiased
estimator of the mean-squared error:
\begin{align}
\, R_h(\lambda_h) = 
    T \, \mathbb{E} \big[ \big( \hat{\beta}_h^{\text{TLP}} - \beta_h \big)^2 \big],
    \label{eq:risk_min}
\end{align}
where recall that $\beta_h$ is the true impulse response. The risk decomposes into squared 
bias and variance:
\begin{align}
    R_h(\lambda_h) = \underbrace{T \big( \mathbb{E} [\hat{\beta}_h^{\text{TLP}}(\lambda_h) 
    - \beta_h ] \big)^2}_{\text{bias}^2(\lambda_h)} + \underbrace{T \, \mathbb{E} 
    \big[ \big( \hat{\beta}_h^{\text{TLP}}(\lambda_h) - \mathbb{E}[\hat{\beta}_h^{\text{TLP}}
    (\lambda_h)] \big)^2 \big]}_{Var(\lambda_h)}.
    \label{eq:bias_var_decomp}
\end{align}
 Assume we have consistent estimators $\hat{\Sigma}^{\mu}_h \stackrel{p}{\rightarrow} \Sigma^{\mu}_h, \text{ for } \mu \in \{LP,VAR,COV\}$. Consistent estimators for LP can be obtained by standard HAC estimators; however, we use the bootstrap. Consistent estimators for the rest of the quantities are discussed in the next subsection. 

\begin{theorem} \label{theo:3}
    \underline{Asymptotically unbiased risk criterion}. Under Assumptions \ref{asn:1}-\ref{asn:2}, for $\zeta=1/2$, the feasible risk criterion: 
\begin{align}
    \hat{R}_h(\lambda_h)
    & = (1 - v(\lambda_h))^2
       T (\hat{\beta}_h^{LP} - \hat{\beta}_h^{VAR})^2 + (2 v(\lambda_h)-1) \hat \Sigma^{LP}_h + 2(1 - v(\lambda_h)) \hat \Sigma^{COV}_h.
    \label{eq:TLP_risk}
\end{align}     
is asymptotically unbiased: $\sup_{\lambda_h \in (0, \infty)} |\E[\hat R_h(\lambda_h)] -R_h(\lambda_h)| \rightarrow 0$.
\end{theorem}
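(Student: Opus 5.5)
The plan is to write TLP as the convex combination \eqref{eq:TLP_convex}, $\hat\beta_h^{TLP}-\beta_h = v(\hat\beta_h^{LP}-\beta_h) + (1-v)(\hat\beta_h^{VAR}-\beta_h)$. The weight $v=v(\lambda_h)$ depends on $X'X$. For the asymptotic risk I will treat $v$ as its probability limit: $X'X/T \to_p$ a positive constant, so $v(\lambda_h)$ is, up to $o_p(1)$, a deterministic function of $\lambda_h/T$ with values in $[0,1]$. Because the criterion is parameterized through $v \in (0,1)$, uniformity over $\lambda_h\in(0,\infty)$ amounts to uniformity over $v\in(0,1)$. Every expression involved is a quadratic polynomial in $v$ with bounded coefficients, so once the coefficients converge, the supremum converges as well.

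\textbf{Step 1: the target risk.} Let $Z_T=\sqrt T(\hat\beta_h^{LP}-\beta_h,\ \hat\beta_h^{VAR}-\beta_h)'$. By Theorem \ref{thm:2} with $\zeta=1/2$, $Z_T\to_d \mathcal N((0,\text{aBias}_h)',\Sigma)$. I will argue that second moments converge as well, using uniform integrability from the $4+2\delta$ moment bound in Assumption \ref{asn:1}(i) together with Theorem \ref{thm:1} for the means. Then
\[
R_h(\lambda_h) = (1-v)^2\text{aBias}_h^2 + v^2\Sigma^{LP}_h + (1-v)^2\Sigma^{VAR}_h + 2v(1-v)\Sigma^{COV}_h + o(1),
\]
with the $o(1)$ term uniform in $v\in[0,1]$.

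\textbf{Step 2: expectation of the feasible criterion.} We have $\sqrt T(\hat\beta^{LP}_h-\hat\beta^{VAR}_h)=(1,-1)Z_T$. Its limit has mean $-\text{aBias}_h$ and variance $\Sigma^{LP}_h+\Sigma^{VAR}_h-2\Sigma^{COV}_h$. Hence
\[
\E[T(\hat\beta_h^{LP}-\hat\beta_h^{VAR})^2]\to \text{aBias}_h^2+\Sigma^{LP}_h+\Sigma^{VAR}_h-2\Sigma^{COV}_h .
\]
For the variance estimators, consistency plus uniform integrability gives $\E\hat\Sigma^\mu_h\to\Sigma^\mu_h$. Substituting these limits,
\[
\E\hat R_h \to (1-v)^2\left(\text{aBias}_h^2+\Sigma^{LP}_h+\Sigma^{VAR}_h-2\Sigma^{COV}_h\right)+(2v-1)\Sigma^{LP}_h+2(1-v)\Sigma^{COV}_h .
\]

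\textbf{Step 3: comparison.} I will check that this limit matches Step 1 up to a term that does not depend on $v$. Expanding:
\begin{itemize}
\item The $\Sigma^{LP}_h$ coefficient is $(1-v)^2+2v-1=v^2$, which matches.
\item The $\Sigma^{VAR}_h$ coefficient is $(1-v)^2$, which matches.
\item The $\Sigma^{COV}_h$ coefficient is $-2(1-v)^2+2(1-v)=2v(1-v)$, which matches.
\item The bias term matches exactly.
\end{itemize}
So the limits agree, and since the coefficients are bounded uniformly in $v$, the supremum of the difference vanishes.

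\textbf{Main obstacle.} The hard step is justifying convergence of expectations rather than distributions. There are two parts to this.
\begin{itemize}
\item The first is uniform integrability of $T(\hat\beta^{LP}_h-\beta_h)^2$ and $T(\hat\beta^{VAR}_h-\beta_h)^2$. The VAR case is the harder one, because $\hat\beta^{VAR}_h$ involves $\hat A^h$ and an inverted sample second moment. I would handle it by a delta-method expansion plus a higher-moment bound on the remainder, using the $4+2\delta$ moments and the mixing condition of Assumption \ref{asn:2}. Alternatively, following the focused information criterion literature of \cite{claeskens2003focused}, I would interpret $\E$ as the expectation under the limiting distribution, in which case the result is immediate from Theorem \ref{thm:2}.
\item The second is the randomness of $v$ through $X'X$. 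I would replace $v$ by its limit and show the error is negligible. This holds because $v\in[0,1]$ is bounded and $X'X/T$ is consistent, so cross terms such as $(v-\bar v)Z_T^2$ vanish in $L^1$ by Hölder's inequality given the moment bounds.
\end{itemize}
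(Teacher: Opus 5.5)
Your proposal is correct and follows the paper's proof essentially step for step. You decompose the risk into squared bias and variance with the weight replaced by its limit, use Theorems 1--2 to get $\E[T(\hat\beta_h^{LP}-\hat\beta_h^{VAR})^2]=\text{aBias}_h^2+\Sigma_h^{LP}+\Sigma_h^{VAR}-2\Sigma_h^{COV}+o(1)$, substitute the consistent variance estimators, and match coefficients in $v$. You are in fact more careful than the paper in two places: it moves from convergence in distribution to convergence of moments without a uniform integrability argument, and it centers $v(\lambda_h)$ at a pointwise probability limit, which equals $1$ for any fixed $\lambda_h$, whereas your centering at a deterministic function of $\lambda_h/T$ is what makes the supremum over $\lambda_h$ meaningful.
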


Theorem~\ref{theo:3} derives an asymptotically unbiased risk criterion under local misspecification of order $T^{-1/2}$, the only one under which a meaningful asymptotic bias - variance trade-off is possible. The criterion~\eqref{eq:TLP_risk} is quadratic in $v(\lambda_h)$, so minimization 
yields a unique closed-form solution. 

Let the optimal shrinkage parameter minimize the feasible risk criterion:
$$
\hat \lambda_h =\mbox{ arg} \min_{\lambda_h \in (0,\infty)} \hat{R}_h(\lambda_h),
$$
when the implied weights $v(\hat \lambda_h)$ are positive, and $\hat \lambda_h =\infty$, so $v(\hat \lambda_h) = 0$, otherwise.
We then obtain the following closed-form expression for the TLP weights: 
\vspace{1em}
\begin{theorem}\label{theo:4}
\underline{Optimal weight}. Under Assumptions \ref{asn:1}-\ref{asn:2}, for $\zeta=1/2$,
\begin{align}
   v(\hat \lambda_h) &= \mbox{max}\left [1- \frac{\hat \Sigma_h^{LP} -\hat \Sigma_h^{COV}}{T(\hat \beta_h^{LP}-\hat \beta_h^{VAR})^2},0\right].
    \label{eq:weight}
\end{align}
If, furthermore, the LP and VAR have the same number of lags, and the structural shocks are i.i.d., then 
\begin{align}
   v(\hat \lambda_h) &= \mbox{max}\left [1- \frac{\hat \Sigma_h^{LP} -\hat \Sigma_h^{VAR}}{T(\hat \beta_h^{LP}-\hat \beta_h^{VAR})^2},0\right].
    \label{eq:weight_simple}
\end{align}
\end{theorem}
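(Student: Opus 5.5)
The plan is to minimize the feasible criterion \eqref{eq:TLP_risk} of Theorem~\ref{theo:3} directly in the weight $v=v(\lambda_h)$. Since $\lambda_h\mapsto v(\lambda_h)=X'X/(X'X+\lambda_h)$ is a strictly decreasing bijection from $(0,\infty)$ onto $(0,1)$ (given $X'X>0$ almost surely), minimizing over $\lambda_h$ is the same as minimizing over $v\in(0,1)$, with the convention $\hat\lambda_h=\infty$ corresponding to $v=0$.

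\textbf{Step 1: first-order condition.} Write $\hat d_h=T(\hat\beta_h^{LP}-\hat\beta_h^{VAR})^2$. The criterion is
\[
g(v)=(1-v)^2\hat d_h+(2v-1)\hat\Sigma_h^{LP}+2(1-v)\hat\Sigma_h^{COV},
\]
which is a quadratic in $v$ with leading coefficient $\hat d_h$. This coefficient is strictly positive with probability one, because the LP and VAR estimators have a continuous joint limit by Theorem~\ref{thm:2}. Hence $g$ is strictly convex. Its derivative is
\[
g'(v)=-2(1-v)\hat d_h+2\hat\Sigma_h^{LP}-2\hat\Sigma_h^{COV}.
\]
Setting $g'(v)=0$ gives the unique unconstrained minimizer
\[
v^*=1-\frac{\hat\Sigma_h^{LP}-\hat\Sigma_h^{COV}}{\hat d_h}.
\]

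\textbf{Step 2: boundaries.} There are two boundary cases to handle.
\begin{itemize}
\item If $v^*\le 0$, then $g'>0$ on $(0,1)$ by convexity. The infimum is then approached as $v\to 0$, i.e.\ $\lambda_h\to\infty$, and the stated convention assigns $v(\hat\lambda_h)=0$.
\item For the upper end, note that $\hat\Sigma_h^{LP}-\hat\Sigma_h^{COV}$ consistently estimates $\Sigma_h^{LP}-\Sigma_h^{COV}$. This equals $\lim T\,\mathrm{Var}(\hat\beta^{LP})-\lim T\,\mathrm{Cov}(\hat\beta^{LP},\hat\beta^{VAR})$, which is nonnegative when VAR is (weakly) more efficient. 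I would argue $v^*\le 1$ under that sign condition. Otherwise the formula would need a $\min\{\cdot,1\}$, and the statement implicitly takes this difference to be nonnegative, as I would note.
\end{itemize}
Combining the two cases gives \eqref{eq:weight}.

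\textbf{Step 3: the special case \eqref{eq:weight_simple}.} Here the only task is to show $\Sigma_h^{COV}=\Sigma_h^{VAR}$ when the lag lengths coincide and the shocks are i.i.d. I would invoke Corollary A.2 of \cite{olea2024double}, as quoted after Theorem~\ref{thm:2}. Their argument is that under correct specification the VAR is efficient, and by a Hausman-type argument $\mathrm{Cov}(\hat\beta^{LP}-\hat\beta^{VAR},\hat\beta^{VAR})\to 0$. I would also verify that local misspecification with $\zeta=1/2$ does not alter the asymptotic variances. Theorem~\ref{thm:2} already supplies this, since the misspecification only shifts the mean. Once $\Sigma_h^{COV}=\Sigma_h^{VAR}$ holds, replacing $\hat\Sigma_h^{COV}$ by a consistent estimator $\hat\Sigma_h^{VAR}$ yields \eqref{eq:weight_simple}.

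\textbf{Main obstacle.} I expect the hardest part to be this covariance identity under the paper's more general assumptions. The cleanest route is to cite the i.i.d.\ equal-lag result directly rather than re-derive the influence functions. The remaining algebra is routine.
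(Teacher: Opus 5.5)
Your proposal matches the paper's proof. Both reparametrize in $v(\lambda_h)$ and solve the first-order condition of the quadratic criterion, giving $v^*=1-(\hat\Sigma_h^{LP}-\hat\Sigma_h^{COV})/[T(\hat\beta_h^{LP}-\hat\beta_h^{VAR})^2]$. Both then use the sign of $\hat\Sigma_h^{LP}-\hat\Sigma_h^{COV}$ for the upper bound (the paper simply asserts it is positive; you rightly flag it as an assumption), truncate at zero, and cite Corollary A.2 of \cite{olea2024double} for $\Sigma_h^{VAR}=\Sigma_h^{COV}$.

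One small fix: a continuous limiting distribution does not make $T(\hat\beta_h^{LP}-\hat\beta_h^{VAR})^2>0$ almost surely in finite samples. You do not need that claim, though. If this quantity is zero, the criterion is linear in $v$ with slope $2(\hat\Sigma_h^{LP}-\hat\Sigma_h^{COV})>0$, so the infimum is again at $v=0$, consistent with the $\max[\cdot,0]$ formula.
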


The result \eqref{eq:weight_simple} follows from Corollary A.2 of \cite{olea2024double}, which shows that with the same number of lags and i.i.d. structural shocks, $\Sigma^{VAR}_h = \Sigma^{COV}_h$. It is worth noting:

\begin{enumerate}
     \item[(i)] If $1/4<\zeta<1/2$, the bias term dominates the variance terms, the weight $v(\hat{\lambda}_h)$ approaches one, and TLP places all weight on LP, avoiding the biased VAR.
     
    \item[(ii)] If $\zeta>1/2$, the VAR has no asymptotic bias, the limit expectation of ${T(\hat \beta_h^{LP}-\hat \beta_h^{VAR})^2}$ is $\hat \Sigma_h^{LP} +\hat \Sigma_h^{VAR}-2\hat \Sigma_h^{COV}$, which is equal to the numerator in \eqref{eq:weight_simple} $\hat \Sigma_h^{LP} -\hat \Sigma_h^{COV}$ if $\hat \Sigma_h^{VAR}=\hat \Sigma_h^{COV}$. However, even in the latter case, because the weights are random, not all weight will be placed on VAR since the maximum is with positive probability not achieved at $0$.
    
    \item[(iii)] If $0<\zeta\leq 1/4$, a case not covered by our assumptions, both LPs and VAR have asymptotic bias, of a larger order than the variance terms, and TLP asymptotically places all weight on LP.
\end{enumerate}

\vspace{1em}
\textbf{Comparison to uniform model averaging.}
An alternative to minimizing risk is to choose weights based on theoretical 
misspecification bounds. \cite{olea2024double} propose selecting the weight 
based on $M = \lVert \alpha(L) \rVert$, the norm of the misspecification 
polynomial, which applies to all horizons uniformly. Their Corollary 4.2 states 
that the optimal weight $\omega$ between LP and VAR should be:
\begin{align}
    \omega = \frac{M^2}{M^2+1}.
    \label{eq:weight_MA}
\end{align}
Comparing equations~\eqref{eq:weight} and~\eqref{eq:weight_MA},  it can be noted that for large misspecifications $0<\zeta<1/2$, TLP asymptotically places all weight on LP. 

\textbf{Comparison to \cite{chen2026estimator}.} In their asymptotic theory, \cite{chen2026estimator} consider averaging LP and VAR when they both do not exhibit asymptotic bias. Because \cite{chen2026estimator} do not shrink LP towards the VAR, but seek the linear combination of LP and VAR that minimizes the mean-squared error of their estimator, their formula is different from ours under a correctly specified VAR.\footnote{Note that while their theory is derived under correct specification, their estimator can also be applied to misspecified cases, and their simulations are under less severe misspecifications than ours.} 
\section{Variance estimation \label{sec:variance}}

Constructing confidence intervals for impulse response functions requires reliable 
variance estimates. For LP, standard variance estimators are available. For VAR 
impulse responses, however, it is known that the delta method
performs poorly in finite samples. TLP inherits these challenges since it combines 
LP with VAR estimates. We therefore employ the bootstrap.

We propose an MSDB procedure for inference on LP, VAR, SLP, 
and TLP estimators. The double bootstrap is employed to calculate \textit{studentized} confidence intervals, which are important to achieve asymptotic refinements when the misspecification is not too large (when $\zeta>1/2$).\footnote{Recently, \cite{cavaliere2024bootstrap} have demonstrated that in the presence of asymptotic bias, one can use either pre-pivoting or double bootstrap to restore coverage. Their setting does not apply to our bootstrap because neither the single nor the double bootstrap we employ can replicate the asymptotic bias.} As discussed earlier, for $\zeta=1/2$, the VAR and TLP estimators suffer from an asymptotic bias. The residual bootstrap we employ does not replicate this bias as it is centered around the VAR and TLP estimators, but it does replicate their variance. This means that the bootstrap t-statistics used for studentization are centered around zero, and the confidence intervals are shifted due to asymptotic bias, distorting coverage. The benefit of the bootstrap in this case is that by centering around zero, these confidence intervals are not further shifted. 

Our procedure shares similarities with the VAR-based resampling scheme in
\cite{montiel2021local}, \cite{olea2024double} and \cite{chen2026estimator}, but differs in several aspects:
\begin{enumerate}
    \item[(i)] We use the recursive residual moving block 
    bootstrap \citep{bruggemann2016inference}, while \cite{olea2024double} use a recursive residual i.i.d. bootstrap, and \cite{chen2026estimator} propose a residual i.i.d. bootstrap after employing a sieve autoregression. All of these are valid under no misspecification.
       
    \item[(ii)] We employ a double bootstrap rather than a single 
    bootstrap. A single bootstrap does not provide reliable variance estimates needed for 
    studentization, since the delta method performs poorly in small 
    samples. The double bootstrap provides consistent 
    variance estimates for standardization; intuitively, the second layer of bootstrap is generated with the true DGP being the first level bootstrap, so it has no bias relative to that DGP, rendering consistent estimates of the VAR variances and the covariance between LP and VAR.
    
    \item[(iii)] When constructing t-statistics, we subtract the bootstrap mean 
    rather than the VAR-implied impulse response. This mean-centering removes finite sample bias in the numerator of the bootstrap 
    t-statistics. 
    
    \item[(iv)] We use symmetrized confidence intervals \citep{hall1988symmetric}. Asymmetric intervals work well for LP, but they do not account for the skewness in the VAR and TLP t-statistics. 
\end{enumerate}

\vspace{1em}
\textbf{First level bootstrap.}
\cite{bruggemann2016inference} showed that if Assumptions similar to \ref{asn:1}-\ref{asn:2} hold with $\alpha(L)=0$, so the model is correctly specified, the recursive moving block bootstrap is valid under conditional heteroskedasticity of unknown form for the VAR estimates. This bootstrap is the one we employ and it proceeds in the following steps: 
(i) estimate the reduced-form VAR and obtain the residuals $\hat{u}_t$; (ii) divide the residual series into overlapping blocks of length 
    $\ell = T^{1/3}$, following \cite{gonccalves2004bootstrapping}; (iii) draw blocks with replacement and paste them together until the series $\{u_t^{(b)}\}_{t=1}^T$ reaches length $T$; (iv) reconstruct the bootstrap sample recursively
$ 
        y_t^{(b)} = \hat{A} y_{t-1}^{(b)} + u_t^{(b)}
$; (v) in each bootstrap replication, estimate impulse responses using both VAR and LP.

\vspace{1em}
\textbf{Second level bootstrap}
The double bootstrap proceeds as follows:
\begin{enumerate}[i)]
    \item Treat the original sample as $b_1 = 1$ and generate $B_1 - 1$ bootstrap 
    samples using the moving-block procedure described above. For each sample
    $b_1 = 1, \dots, B_1$, estimate impulse responses $\hat{\beta}_{h,b_1}^{\mu}$ 
    at each horizon $h$ for $\mu \in \{LP, VAR, SLP\}$.
    
    \item For each first-level bootstrap sample, apply the same moving-block resampling 
    procedure to generate $B_2$ second-level bootstrap replications, now treating 
    the first-level sample as the data. This yields 
    $\hat{\beta}_{h,b_1,b_2}^{\mu}$ for $b_2 = 1, \dots, B_2$.
    
    \item Compute the variances and the covariance between LP and VAR for $b_1>1$ from second-level replications:
    \begin{align}
        \hat{\Sigma}^\mu_{h,b_1} = \widehat{\text{Var}}_{b_2}
            \left(\hat{\beta}_{h,b_1,b_2}^{\mu}\right), \qquad
        \hat{\Sigma}^{COV}_{h,b_1} = \widehat{\text{Cov}}_{b_2}
            \left(\hat{\beta}_{h,b_1,b_2}^{LP}, \hat{\beta}_{h,b_1,b_2}^{VAR}\right).
    \end{align}
\end{enumerate}
For LP and VAR, these quantities are sufficient to construct the t-statistics. 
TLP is constructed from LP and VAR using the weights described below. SLP 
requires additional steps to obtain variance estimates. 

\textbf{TLP.} To compute the weights $v(\hat{\lambda}_h)$, we need consistent 
estimates of $\hat{\Sigma}_h^{LP}$, $\hat{\Sigma}_h^{VAR}$, and 
$\hat{\Sigma}_h^{COV}$. We obtain these by averaging the double-bootstrap estimates across 
first-level bootstrap samples:
\begin{align}
    \hat{\Sigma}_h^{\mu} &= \frac{1}{B_1} \sum^{B_1}_{b_1=1} 
        \hat{\Sigma}_{h,b_1}^{\mu}, \quad \mu \in \{LP, VAR\}, \qquad
    \hat{\Sigma}_h^{COV} = \frac{1}{B_1} \sum^{B_1}_{b_1=1} 
        \hat{\Sigma}_{h,b_1}^{COV}.
\end{align}
These are used to compute the weights $v(\hat{\lambda}_{h,b_1})$ using 
equation~(\ref{eq:weight}), with first level bootstrap used for LP and VAR estimates. We 
re-estimate the bias components of the weights of TLP in each first level bootstrap replication. The bootstrap replications of TLP are then
\begin{equation}
    \hat{\beta}_{h,b_1}^{TLP}(\hat{\lambda}_{h,b_1}) = v(\hat{\lambda}_{h,b_1})
        \hat{\beta}_{h,b_1}^{LP} + (1 - v(\hat{\lambda}_{h,b_1}) )
        \hat{\beta}_{h,b_1}^{VAR},
\end{equation}
and the variance of each first-level replication is computed as the average variance of the TLP estimates across the second level bootstrap sample.

\textbf{SLP.} In principle, it is possible to also estimate the SLP variance using the 
second bootstrap layer. However, to reduce computational burden, we instead calculate the variance only in the first-level loop, using the 
LP variance estimates from the second-level bootstrap:
\begin{align}
    \hat{\Sigma}_{b_1}^{SLP} = (\tilde{X}'\tilde{X} + \tilde{\lambda} P)^{-1} 
        (\tilde{X}'\tilde{X}) \, \hat{\Omega}_{b_1}^{LP} \, 
        (\tilde{X}'\tilde{X}) (\tilde{X}'\tilde{X} + \tilde{\lambda} P)^{-1},
\end{align}
where $\hat{\Omega}_{b_1}^{LP}$ is the LP variance-covariance matrix across all horizons, estimated by stacking second level bootstrap draws of LP and taking covariance of the resulting matrix, 
and $\tilde{\lambda}$ is fixed at its original sample estimate. This approach avoids the 
$B_2$ second-level iterations for SLP while still using the double-bootstrap 
variance estimates for LP. The result is an $H \times H$ covariance matrix; the 
diagonal element $[\hat{\Sigma}_{b_1}^{SLP}]_{hh}$ gives the variance at 
horizon $h$.

\textbf{Constructing t-statistics.} 
For all estimators $\mu \in \{LP, VAR, SLP, TLP\}$, we construct t-statistics by 
centering around the bootstrap mean:
\begin{align}
    t^\mu_{h,b_1} = \frac{\hat{\beta}_{h,b_1}^\mu - \bar{\beta}_h^\mu}
        {\sqrt{\hat{\Sigma}_{h,b_1}^\mu}}, \quad \text{where} \quad 
    \bar{\beta}_h^\mu = \frac{1}{B_1}\sum_{b_1=1}^{B_1}\hat{\beta}_{h,b_1}^\mu.
\end{align}
Subtracting the bootstrap mean rather than the VAR-implied impulse response 
removes finite sample bias in the numerator of the t-statistics.

\textbf{Symmetric confidence intervals.}
The VAR bias under misspecification shifts the location of its confidence interval and 
leads to asymmetric t-statistic distributions. TLP inherits some of this asymmetry by 
construction. Additionally, the exponentiation in VAR exacerbates the issues. Symmetric confidence intervals \citep{hall1988symmetric} alleviate 
this problem and, as discussed in \cite{hall1988symmetric}, tend to exhibit faster convergence rates than their asymmetric 
counterparts.

For any estimator $\mu \in \{LP, VAR, SLP, TLP\}$, we rank the bootstrap 
t-statistics by absolute value and take the $(1-\alpha)$-quantile of 
$|t^\mu_{h,b_1}|$ as the critical value $\hat{t}^\mu_{h,1-\alpha}$. The confidence 
interval is:
\begin{align}
    \left(\hat{\beta}_h^\mu - \hat{t}^\mu_{h,1-\alpha} 
        \sqrt{\hat{\Sigma}_{h,1}^{\mu}}, \quad 
    \hat{\beta}_h^\mu + \hat{t}^\mu_{h,1-\alpha} 
        \sqrt{\hat{\Sigma}_{h,1}^{\mu}} \right),
\end{align}
where $\hat{\Sigma}_{h,1}^{\mu}$ is the variance estimate from the original 
sample. 

\section{Simulation Study \label{sec:simulation}}

We conduct 1,000 repetitions for sample sizes $T \in \{200,800\}$, comparing the performance of LP, SLP, TLP, VAR, and BLP estimators. The MSDB uses $B_1 = 200$ first level replications and $B_2= 100$ second-level replications. The LP, SLP, and TLP estimators are implemented using $p=10$ lags on the right-hand side, whereas BLP and VAR, as well as the target model for TLP, are specified with either $q \in \{1,8\}$ lags. The target coverage level is set at 90\%. The BLP method utilizes one-sixth of the available data to construct the presample VAR prior, and the coverage is assessed using the posterior distribution.

\textbf{VARMA(1,$\boldsymbol{\infty}$).} The data is generated by the following model
\begin{align}
    Y_t = A Y_{t-1} + \Gamma(I + \eta_T\alpha(L))\epsilon_t,
\end{align}
where \begin{align}
    \Gamma^{-1} = 
    \begin{bmatrix}
        1 & 0\\
        -1.07 & 1
    \end{bmatrix} \;\;\;\;\     A = 
    \begin{bmatrix}
        0.2 & 0\\
        -0.764 & 0.985
    \end{bmatrix},
    \end{align}
    and $\epsilon_t \sim \text{i.i.d. } \mathcal N(0,I_2)$.
This DGP is taken from \cite{olea2024double}, which simulates the \cite{smets2007shocks} model, a well-known DSGE model that is rich enough to match the second-moment properties of standard macroeconomic time-series data. This model implies a VARMA representation with interpretable structural shocks, implying that any finite order VAR is misspecified. The difference from the DGP in \cite{olea2024double} is that  we use the scaling $\eta_T=\eta \times \sqrt{\frac{200}{T}}$ to check performance of the double bootstrap as $T$ increases. The multiplication by $\sqrt{200}$ is made such that at $T=200$ we estimate the real DGP when $\eta=1$. The data generating process used is VARMA$(1,100)$, with $\eta \in\{1, 2, 4, 8, 16, 32, 64\}$.

\subsection{Comparing bootstrap variants}

Here, we set $\eta=1$. Figure~\ref{fig:coverage_comparison} compares the MSDB coverage properties against an alternative double bootstrap across four estimators: 
LP, VAR, SLP, and TLP for $\eta=1$. The red solid lines represent MSDB, while the purple
lines show the alternative which is the same double bootstrap but where the t-statistics used in the studentization are
centered around the original VAR estimates, as in 
\cite{montiel2021local}. LP has 10 lags, and SVAR and the target for TLP have $q \in \{1,8\}$ lags.

\textbf{LP and SLP}. For LP, both bootstrap methods achieve coverage close to or above the nominal 90\% level across all horizons and sample sizes, with no difference in the average confidence interval length. The choice of $q$ changes the coverage and the interval length only marginally. For SLP, there is a drop in coverage at the first few horizons. Nevertheless, at higher horizons the centering around the bootstrap mean yields closer to nominal coverage. 

\textbf{VAR and TLP}. For both VAR and TLP, MSDB yields a clear improvement: centering around the VAR estimates leads to substantial undercoverage, consistent with findings in \cite{olea2024double}. MSDB almost restores nominal coverage for both estimators at $T=200$ and $T=800$ for $q=8$. For $q=1$, there is significant undercoverage for VAR and some undercoverage for TLP. Therefore, it is important to use enough lags in the target VAR.

\begin{figure}[H]
    \centering
    \includegraphics[width=\linewidth, height=0.90\textheight, keepaspectratio]{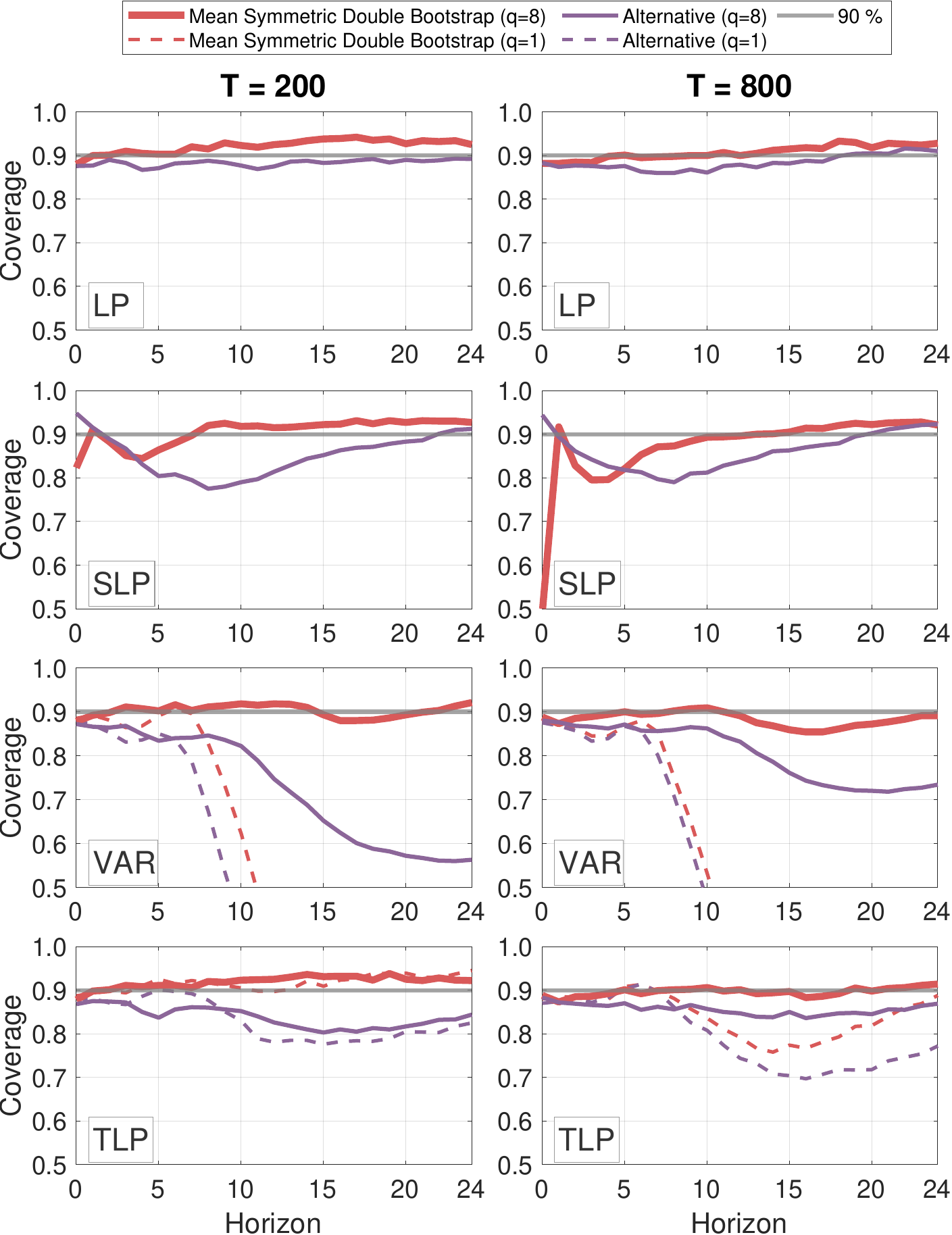}
    \caption{Coverage for $T=200$ (left) and $T=800$ (right), DGP = VARMA(1,$100$), $\eta=1$. Red solid line: MSDB with $q=8$. Purple solid line (alternative method): symmetric double bootstrap centering around VAR estimates. Dashed lines for $q=1$.}
    \label{fig:coverage_comparison}
\end{figure}

\begin{figure}[H]
    \centering
    \includegraphics[width=\linewidth, height=0.90\textheight, keepaspectratio]{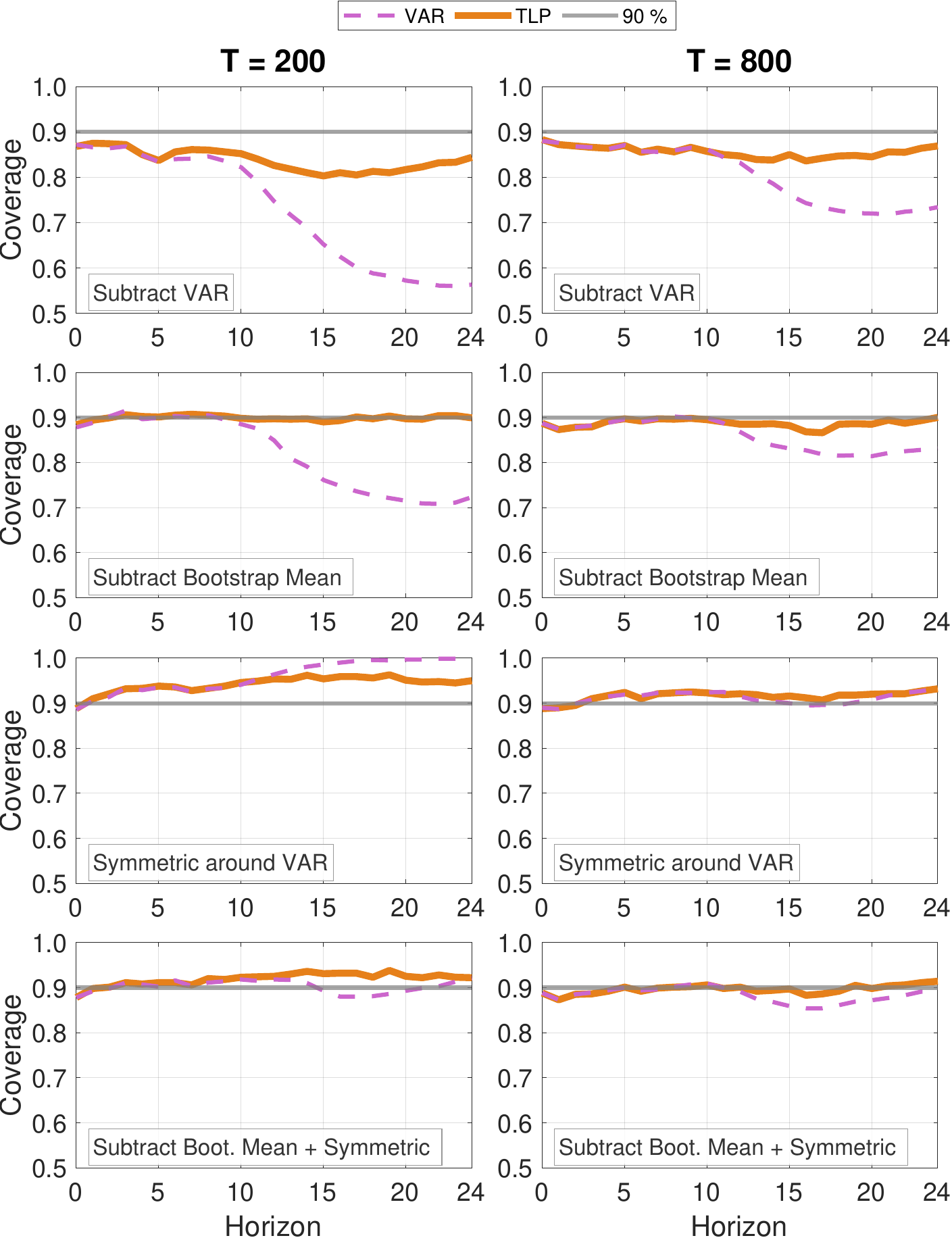}
    \caption{Coverage for $T=200$ (left) and $T=800$ (right), DGP = VARMA(1,$100$), $\eta=1$, $q=8$, Orange solid line: TLP. Pink dashed line: VAR. Rows compare four studentized double bootstraps: Subtract VAR estimate, Subtract Bootstrap Mean, Symmetric around VAR, and Subtract Bootstrap Mean + Symmetric (MSDB).}
    \label{fig:var_tlp_method_coverage}
\end{figure}

\begin{figure}[H]
    \centering
    \includegraphics[width=0.9\linewidth, keepaspectratio]{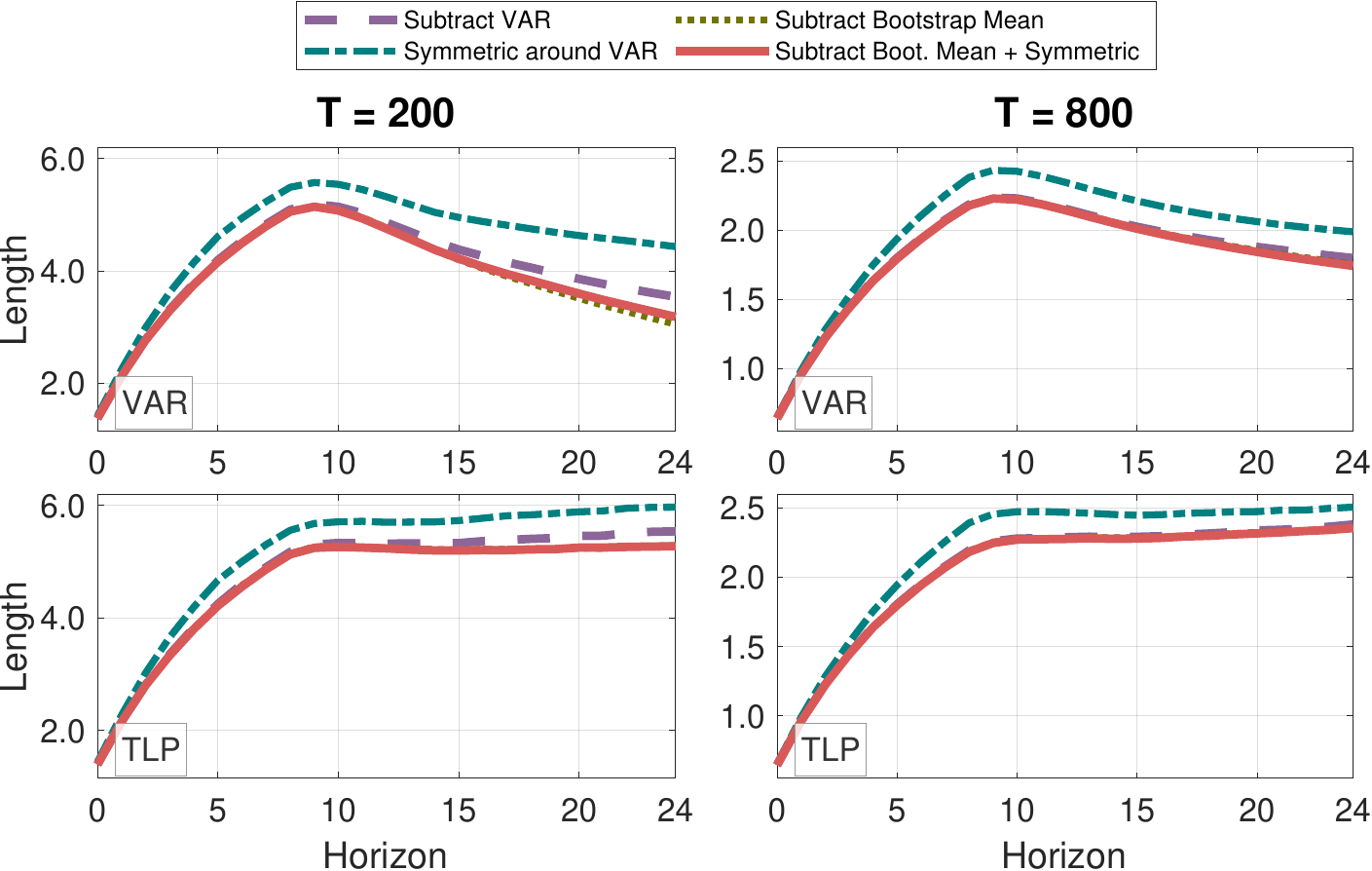}
    \caption{MC Average confidence interval length for VAR and TLP, $T=200$ (left) and $T=800$ (right), DGP = VARMA(1,$100$), $\eta=1$, $q=8$. Each panel compares four studentized double bootstraps: Subtract VAR estimate, Subtract Bootstrap Mean, Symmetric around VAR, and Subtract Bootstrap Mean + Symmetric (MSDB).}
    \label{fig:var_tlp_method_length}
\end{figure}

Figure~\ref{fig:var_tlp_method_coverage} focuses on TLP and VAR with $q=8$, building up MSDB one change at a time. The first row shows significant undercoverage when using a standard double bootstrap centered at the VAR estimate. The second row centers the $t$-statistic numerator around the first level bootstrap mean instead, showing substantial improvements in coverage. The third row keeps VAR centering but switches to symmetric intervals, which pushes coverage, especially for VAR, above the nominal level, while increasing the length of the confidence intervals, as seen in Figure~\ref{fig:var_tlp_method_length}. The fourth row shows the MSDB, which delivers close to nominal coverage for TLP and less distortion for VAR, with the length of the confidence interval remaining approximately the same as in the asymmetric confidence interval cases.

\begin{figure}[H]
    \centering
    \includegraphics[width=0.9\linewidth, keepaspectratio]{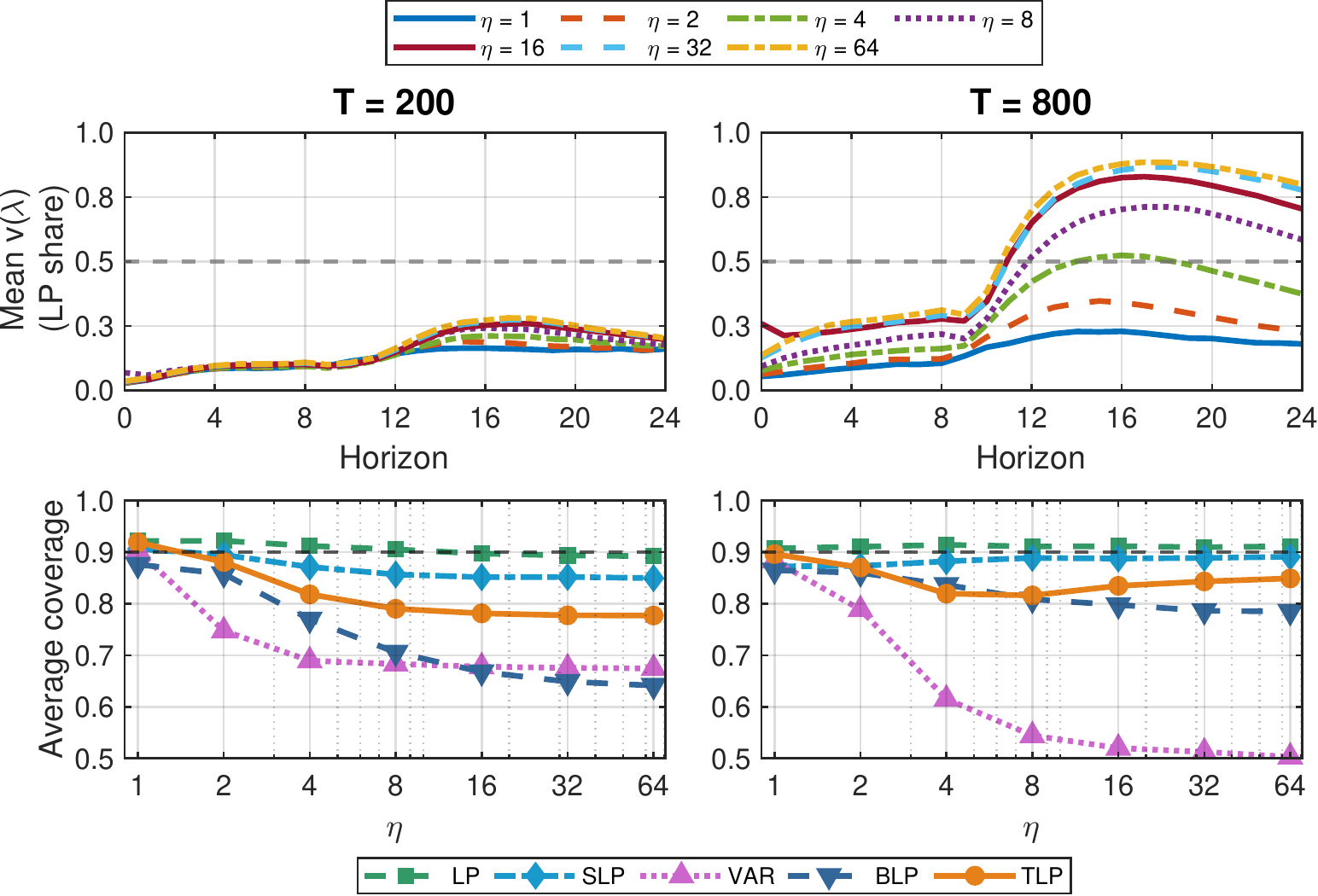}
    \caption{Top row: MC average weight depending on the size of the misspecification $\eta$. Bottom row: MC average coverage depending on the size of the misspecification $\eta$. $T=200$ (left) and $T=800$ (right).}
    \label{fig:mean_weight}
\end{figure}

\subsection{Comparison Across Estimators}

Figure \ref{fig:mean_weight} shows the mean - over 24 horizons - TLP weights and mean coverage attributed to LP across different misspecification magnitudes. As expected, the larger the misspecification, the more weight TLP places on LP, which means that first the TLP coverage deteriorates as the misspecification increases, and then it improves again as TLP approaches LP.

Figures \ref{fig:varma1infty_metrics}-\ref{fig:varma1inftyBB4_metrics} and Appendix Figures \ref{fig:varma1inftyBB8_metrics}-\ref{fig:varma1inftyBB32_metrics} show coverage, 
length, bias, standard deviation, and RMSE for all five estimators, across different misspecification sizes, horizon by horizon. The coverage and length are for MSDB confidence intervals, while the bias, standard deviation and RMSE are Monte Carlo averages computed with the original sample estimates. 

VAR displays significant negative bias at 
longer horizons while LP remains approximately unbiased for large samples until the misspecification reaches $\eta=8$. TLP inherits some 
bias from its VAR target, but this cost is offset by a marked reduction in variance compared to LP. We note that for $\eta=1$, the misspecification is small enough so that the bias - variance trade-off is not present, and VAR has the smallest RMSE. However, even in this case, for most horizons, the RMSE of TLP is similar, and the VAR exhibits slight coverage distortions while the coverage of TLP is close to nominal. As $\eta$ increases, the trade-off becomes visible, and TLP exhibits smaller RMSE than VAR, with marked RMSE improvements over VAR at $\eta =\{2, 4\}$. For the latter misspecifications, the TLP bias is markedly smaller than that of the VAR, and the variance is smaller than that of the LP, no longer increasing with the horizon. As the misspecification increases, some coverage distortions of TLP are to be expected, as discussed in the introduction; however, they are markedly smaller than those of the VAR. Interestingly, as shown in Figure \ref{fig:mean_weight}, these coverage distortions become smaller again as the misspecification further increases and more weight is placed on LP.
 
Since we allow for conditional heteroskedasticity, we also simulate the VARMA(1,$100$) with $\eta=200^{1/2} \times T^{-1/2}$ and the errors following a GARCH(1,1) process: $\epsilon_t = \sigma_t z_t$ with $z_t \sim \text{i.i.d.}N(0,1)$ and $\sigma_t^2 = \omega + \alpha \epsilon_{t-1}^2 + \beta \sigma_{t-1}^2$, where $\omega = 0.05$, $\alpha = 0.10$, and $\beta = 0.85$. The results are similar to Figures~\ref{fig:varma1infty_metrics}-\ref{fig:varma1inftyBB4_metrics} and are shown in Figures~\ref{fig:varma1garch_metrics}~-~\ref{fig:varma1garchBB4_metrics} in Appendix~\ref{app:garch}.

\textbf{SLP and BLP.} SLP achieves nominal coverage at longer horizons but suffers from severe 
undercoverage at short horizons. This pattern is expected, as smoothing requires neighboring estimates, 
which are unavailable at the initial horizon. Even when coverage is near 
nominal, the average length of the CI does not decrease meaningfully 
compared to standard LP. It also exhibits substantial bias at smaller horizons without meaningful reductions in variance relative to LP. BLP also suffers from substantial ``undercoverage'' at short horizons, with wide credible intervals, likely because it uses a sixth of the sample 
as a presample to initialize the prior distribution, reducing effective sample size.

\begin{figure}[H]
    \centering
    \includegraphics[width=\linewidth, height=0.90\textheight, keepaspectratio]{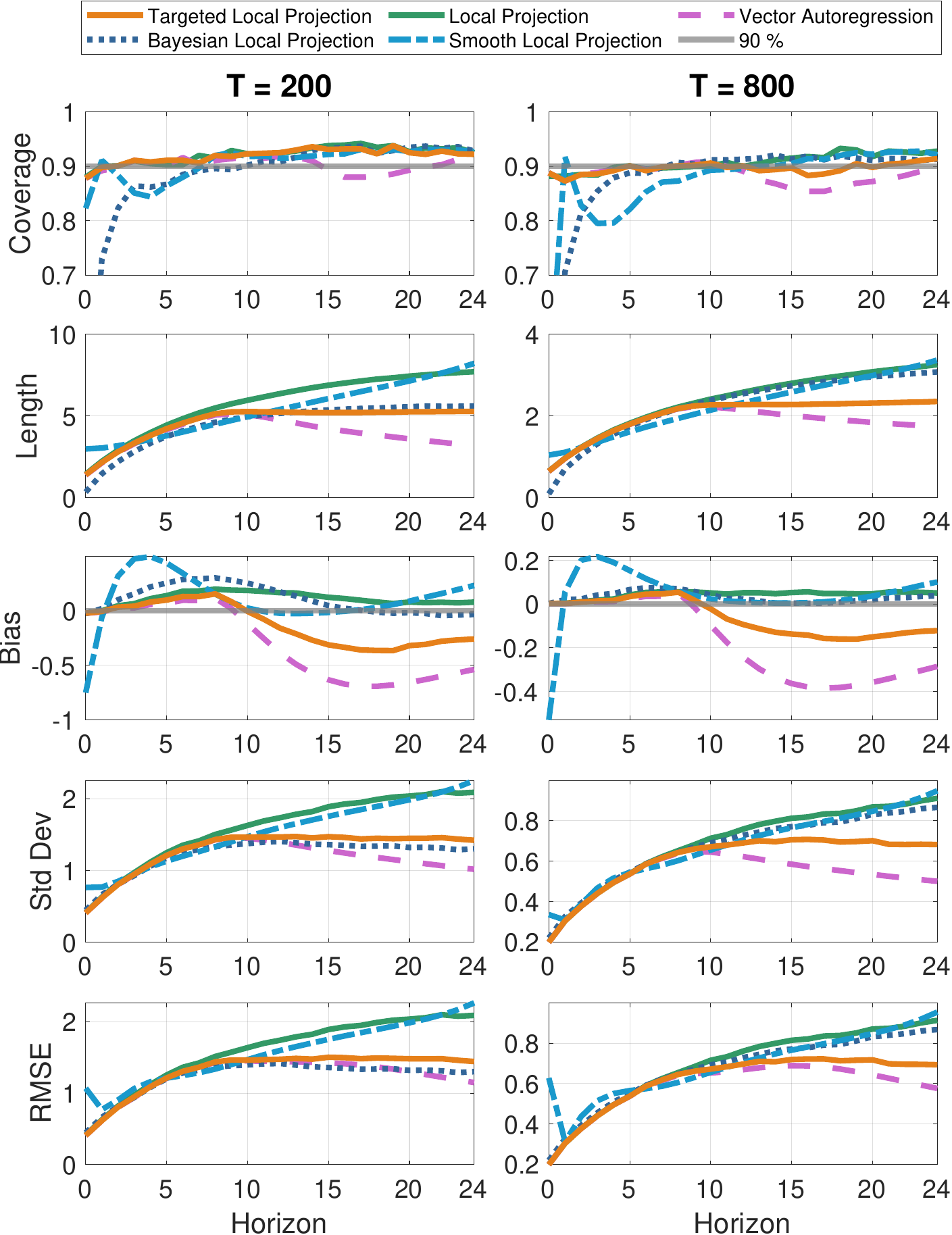}
    \caption{Coverage, length, bias, standard deviation, and RMSE for $T=200$ (left) and $T=800$ (right), DGP = VARMA(1,$100$), $\eta=1$, conditionally homoskedastic errors. Methods: Targeted Local Projections (10,8), Local Projections (10), Vector Autoregression (8), Smooth Local Projections (10) and Bayesian Local Projections (8,8). Inference by MSDB for TLP, LP, VAR and SLP.}
    \label{fig:varma1infty_metrics}
\end{figure}
\newpage

\begin{figure}[H]
    \centering
    \includegraphics[width=\linewidth, height=0.90\textheight, keepaspectratio]{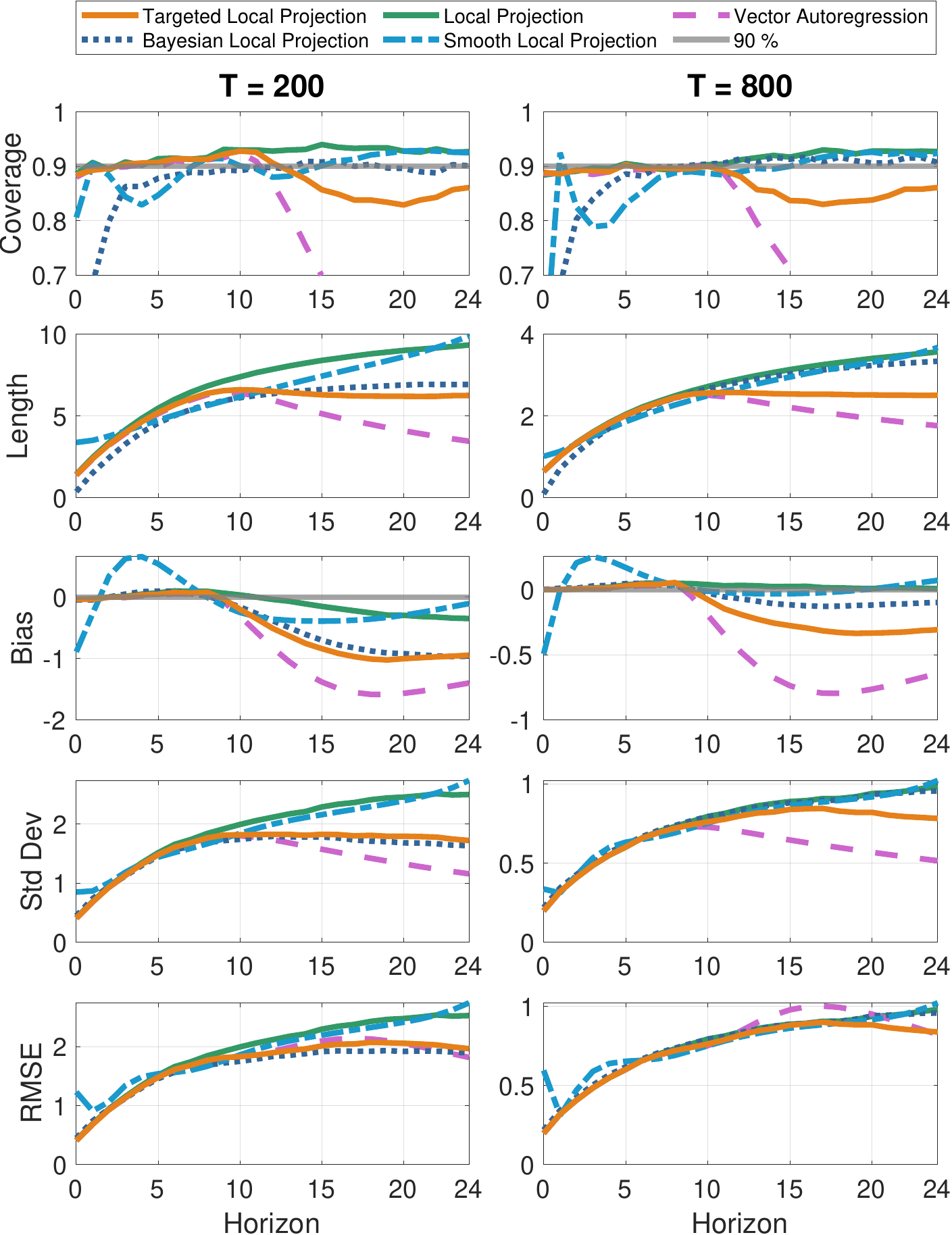}
    \caption{Coverage, length, bias, standard deviation, and RMSE for $T=200$ (left) and $T=800$ (right), DGP = VARMA(1,$100$), $\eta=2$, conditionally homoskedastic errors. Methods: Targeted Local Projections (10,8), Local Projections (10), Vector Autoregression (8), Smooth Local Projections (10) and Bayesian Local Projections (8,8). Inference by MSDB for TLP, LP, VAR and SLP.}
    \label{fig:varma1inftyBB2_metrics}
\end{figure}
\newpage
\begin{figure}[H]
    \centering
    \includegraphics[width=\linewidth, height=0.90\textheight, keepaspectratio]{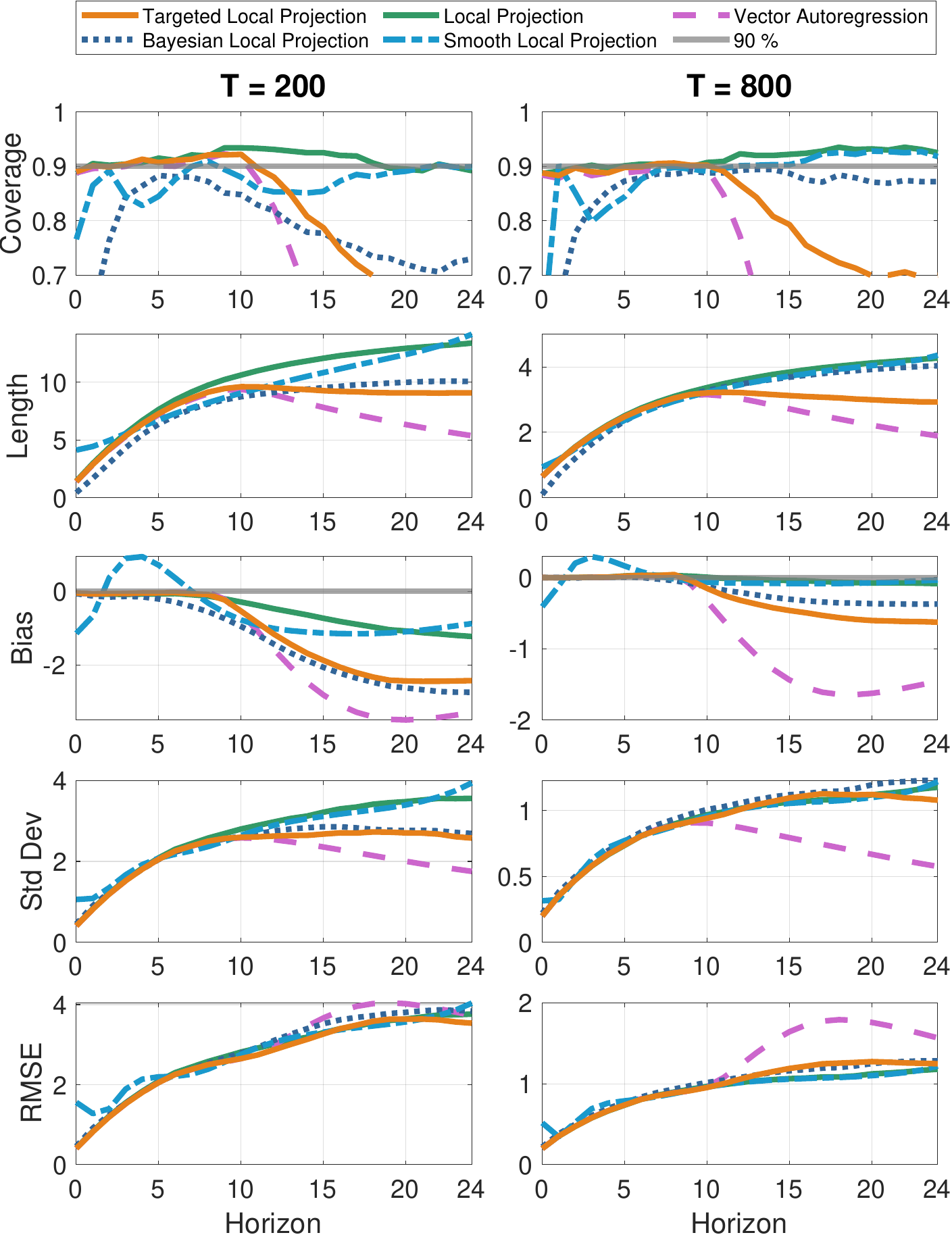}
    \caption{Coverage, length, bias, standard deviation, and RMSE for $T=200$ (left) and $T=800$ (right), DGP = VARMA(1,$100$), $\eta=4$, conditionally homoskedastic errors. Methods: Targeted Local Projections (10,8), Local Projections (10), Vector Autoregression (8), Smooth Local Projections (10) and Bayesian Local Projections (8,8). Inference by MSDB for TLP, LP, VAR and SLP.}
    \label{fig:varma1inftyBB4_metrics}
\end{figure}

\section{Empirical Application}\label{sec:application}
Following \cite{ferreira2023bayesian}, we estimate impulse response functions of key macroeconomic variables to a federal funds rate shock using a seven-variable system that includes real GDP, real consumption, real investment, total hours worked, real wages, the GDP deflator, and the federal funds rate. The data are from the US and span 1954Q3 to 2019Q4, with all variables expressed in logarithms and seasonally adjusted at the annual rate, except for the policy rate. More details on the data construction can be found in \cite{ferreira2023bayesian}, Online Appendix A. 

As in \cite{ferreira2023bayesian}, the monetary policy shock is identified via a recursive ordering, with the federal funds rate placed last. 
We estimate TLP, LP, and VAR impulse responses with $(p,q) = (8,4)$ and $(p,q) = (10,4)$ using the MSDB for inference. The estimated impulse responses are shown in Figure~\ref{fig:fedfundsrate_irfs} for $p=10$, and in Appendix Figure \ref{fig:fedfundsrate_irfs2} for $p=8$. As expected, following a contractionary monetary policy shock, real GDP, consumption, investment, and hours worked decline. From the two figures, it is visible that for most variables except hours worked, the GDP deflator and the federal funds rate, the LP and the VAR estimates are quite different at higher horizons, and a possible reason for these differences could be that the VAR is misspecified at higher horizons, as also suggested in \cite{ferreira2023bayesian}. When the two estimates are very different, the TLP estimates place more weight on LP, and when they are similar, more weight is placed on the VAR. 

The TLP confidence intervals are tighter than their LP counterparts. Tables 1-2 show the horizons at which the impulse responses are statistically significant at the 90\% level for these first four variables, with blue highlighting the horizons at which TLP estimates are significant while LP estimates are not. We see that particularly for real investment, the TLP estimates are larger and remain significant for most horizons, demonstrating the usefulness of our estimator for empirical applications in reducing variance. 

\begin{figure}[H]
\centering
\includegraphics[width=\linewidth, height=0.90\textheight, keepaspectratio]{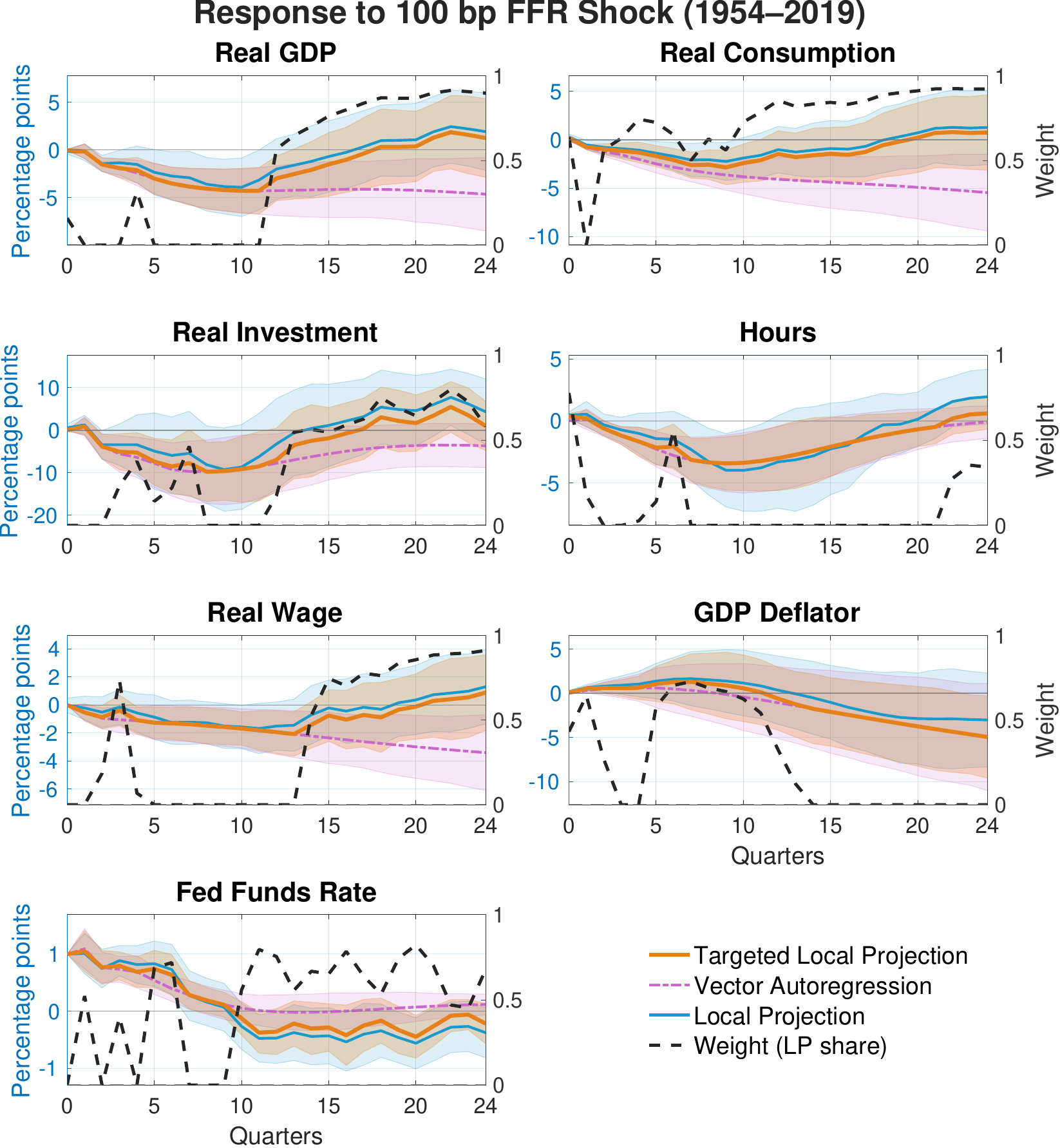}
\caption{Impulse responses to a 100 basis point shock to the federal funds rate (1954--2019). The panels display the responses of Real GDP, Real Consumption, Real Investment, Hours, Real Wage, GDP Deflator, and the federal funds rate. Shaded regions correspond to 90\% confidence intervals. Estimation uses $p=10$ and $q=4$ lags.}
\label{fig:fedfundsrate_irfs}
\end{figure}

\begin{table}[H]
\centering
\resizebox{\textwidth}{!}{
\begin{tabular}{l *{12}{c}}
\hline\hline
Horizon & \multicolumn{3}{c}{Real GDP} & \multicolumn{3}{c}{Real Consumption} & \multicolumn{3}{c}{Real Investment} & \multicolumn{3}{c}{Hours Worked} \\
$h$ & LP & TLP & VAR & LP & TLP & VAR & LP & TLP & VAR & LP & TLP & VAR \\
\hline
0 & $-0.25^{*}$ & $-0.16$ & $0.00$ & $0.31^{*}$ & $0.22$ & $0.00$ & $0.63$ & $0.12$ & $0.00$ & $0.61^{*}$ & $0.54^{*}$ & $0.00$ \\
1 & $-0.46$ & $-0.27$ & $-0.15$ & $-0.47^{*}$ & $-0.65^{*}$ & $-0.74^{*}$ & $1.21$ & $0.88$ & $0.88$ & $0.72$ & $0.51$ & $0.14$ \\
2 & $-1.77^{*}$ & $-1.54^{*}$ & $-1.54^{*}$ & $-0.78^{*}$ & $-0.93^{*}$ & $-1.23^{*}$ & $-3.48$ & $\textcolor{blue}{\boldsymbol{-3.85^{*}}}$ & $-3.85^{*}$ & $-0.03$ & $-0.32$ & $-0.61$ \\
3 & $-1.67^{*}$ & $-1.92^{*}$ & $-1.92^{*}$ & $-0.89^{*}$ & $-1.06^{*}$ & $-1.54^{*}$ & $-3.27$ & $-4.47$ & $-5.58^{*}$ & $-0.34$ & $-0.61$ & $-1.18^{*}$ \\
4 & $-1.79$ & $\textcolor{blue}{\boldsymbol{-2.41^{*}}}$ & $-2.41^{*}$ & $-1.10^{*}$ & $-1.30^{*}$ & $-2.02^{*}$ & $-2.88$ & $-4.13$ & $-6.40^{*}$ & $-0.56$ & $-0.88$ & $-1.68^{*}$ \\
5 & $-2.47^{*}$ & $-3.02^{*}$ & $-3.02^{*}$ & $-1.31^{*}$ & $-1.54^{*}$ & $-2.49^{*}$ & $-4.08$ & $-5.65$ & $-7.82^{*}$ & $-0.99$ & $-1.38$ & $-2.27^{*}$ \\
6 & $-2.63^{*}$ & $-3.48^{*}$ & $-3.52^{*}$ & $-1.56^{*}$ & $-1.84^{*}$ & $-2.83^{*}$ & $-4.98$ & $-6.62$ & $-9.26^{*}$ & $-1.26$ & $-1.67$ & $-2.79^{*}$ \\
7 & $-2.69$ & $\textcolor{blue}{\boldsymbol{-3.47^{*}}}$ & $-3.85^{*}$ & $-1.82^{*}$ & $-2.16^{*}$ & $-3.15^{*}$ & $-4.93$ & $-6.61$ & $-9.73^{*}$ & $-2.08$ & $\textcolor{blue}{\boldsymbol{-2.76^{*}}}$ & $-3.14^{*}$ \\
8 & $-3.11^{*}$ & $-4.08^{*}$ & $-4.08^{*}$ & $-1.73^{*}$ & $-2.07^{*}$ & $-3.43^{*}$ & $-6.81$ & $\textcolor{blue}{\boldsymbol{-9.79^{*}}}$ & $-9.79^{*}$ & $-3.01^{*}$ & $-3.35^{*}$ & $-3.35^{*}$ \\
9 & $-3.14^{*}$ & $-4.22^{*}$ & $-4.22^{*}$ & $-1.84^{*}$ & $-2.23^{*}$ & $-3.64^{*}$ & $-6.84$ & $\textcolor{blue}{\boldsymbol{-9.63^{*}}}$ & $-9.63^{*}$ & $-3.42^{*}$ & $-3.42^{*}$ & $-3.42^{*}$ \\
10 & $-3.15^{*}$ & $-4.29^{*}$ & $-4.29^{*}$ & $-1.64$ & $\textcolor{blue}{\boldsymbol{-2.02^{*}}}$ & $-3.82^{*}$ & $-5.72$ & $\textcolor{blue}{\boldsymbol{-9.10^{*}}}$ & $-9.18^{*}$ & $-3.27^{*}$ & $-3.38^{*}$ & $-3.38^{*}$ \\
11 & $-2.65$ & $\textcolor{blue}{\boldsymbol{-3.65^{*}}}$ & $-4.30^{*}$ & $-1.38$ & $-1.76$ & $-3.96^{*}$ & $-4.06$ & $-6.95$ & $-8.53^{*}$ & $-3.25$ & $\textcolor{blue}{\boldsymbol{-3.23^{*}}}$ & $-3.23^{*}$ \\
12 & $-1.94$ & $\textcolor{blue}{\boldsymbol{-2.74^{*}}}$ & $-4.28^{*}$ & $-0.97$ & $-1.33$ & $-4.08^{*}$ & $-2.27$ & $-4.85$ & $-7.79^{*}$ & $-3.03$ & $\textcolor{blue}{\boldsymbol{-3.00^{*}}}$ & $-3.00^{*}$ \\
13 & $-1.71$ & $\textcolor{blue}{\boldsymbol{-2.53^{*}}}$ & $-4.24^{*}$ & $-1.31$ & $-1.75$ & $-4.19^{*}$ & $-0.01$ & $-2.23$ & $-7.01^{*}$ & $-3.07$ & $\textcolor{blue}{\boldsymbol{-2.72^{*}}}$ & $-2.72^{*}$ \\
14 & $-1.41$ & $-2.22$ & $-4.20^{*}$ & $-1.08$ & $-1.52$ & $-4.28^{*}$ & $0.56$ & $-1.86$ & $-6.26$ & $-3.06$ & $-2.41$ & $-2.41$ \\
15 & $-0.98$ & $-1.74$ & $-4.16^{*}$ & $-0.93$ & $-1.39$ & $-4.38^{*}$ & $1.13$ & $-1.46$ & $-5.56$ & $-2.63$ & $-2.08$ & $-2.08$ \\
16 & $-0.42$ & $-1.11$ & $-4.14^{*}$ & $-0.77$ & $-1.23$ & $-4.47^{*}$ & $2.42$ & $-0.00$ & $-4.96$ & $-2.31$ & $-1.76$ & $-1.76$ \\
\hline\hline
\end{tabular}}
\caption{Impulse Responses to 100 bp Federal Funds Rate Shock (bold blue indicates TLP(8,4) significance where LP(8) is insignificant; a star indicates 90\% significance). The table is truncated to include the first 16 quarters.}
\label{tab:irf_sig}
\end{table}

\begin{table}[H]
\centering
\resizebox{\textwidth}{!}{
\begin{tabular}{l *{12}{c}}
\hline\hline
Horizon & \multicolumn{3}{c}{Real GDP} & \multicolumn{3}{c}{Real Consumption} & \multicolumn{3}{c}{Real Investment} & \multicolumn{3}{c}{Hours Worked} \\
$h$ & LP & TLP & VAR & LP & TLP & VAR & LP & TLP & VAR & LP & TLP & VAR \\
\hline
0 & $-0.17$ & $-0.03$ & $0.00$ & $0.29$ & $0.19$ & $0.00$ & $0.55$ & $0.00$ & $0.00$ & $0.47^{*}$ & $0.36^{*}$ & $0.00$ \\
1 & $-0.24$ & $-0.15$ & $-0.15$ & $-0.53^{*}$ & $-0.74^{*}$ & $-0.74^{*}$ & $1.24$ & $0.88$ & $0.88$ & $0.55$ & $0.21$ & $0.14$ \\
2 & $-1.44^{*}$ & $-1.54^{*}$ & $-1.54^{*}$ & $-0.78^{*}$ & $-0.97^{*}$ & $-1.23^{*}$ & $-3.44^{*}$ & $-3.85^{*}$ & $-3.85^{*}$ & $-0.31$ & $\textcolor{blue}{\boldsymbol{-0.61^{*}}}$ & $-0.61$ \\
3 & $-1.35^{*}$ & $-1.92^{*}$ & $-1.92^{*}$ & $-0.92^{*}$ & $-1.15^{*}$ & $-1.54^{*}$ & $-3.43$ & $\textcolor{blue}{\boldsymbol{-5.08^{*}}}$ & $-5.58^{*}$ & $-0.72$ & $\textcolor{blue}{\boldsymbol{-1.18^{*}}}$ & $-1.18^{*}$ \\
4 & $-1.53$ & $\textcolor{blue}{\boldsymbol{-2.14^{*}}}$ & $-2.41^{*}$ & $-1.07^{*}$ & $-1.31^{*}$ & $-2.02^{*}$ & $-3.44$ & $\textcolor{blue}{\boldsymbol{-5.28^{*}}}$ & $-6.40^{*}$ & $-1.01$ & $\textcolor{blue}{\boldsymbol{-1.66^{*}}}$ & $-1.68^{*}$ \\
5 & $-2.35$ & $\textcolor{blue}{\boldsymbol{-3.02^{*}}}$ & $-3.02^{*}$ & $-1.38^{*}$ & $-1.69^{*}$ & $-2.49^{*}$ & $-4.93$ & $\textcolor{blue}{\boldsymbol{-7.41^{*}}}$ & $-7.82^{*}$ & $-1.43$ & $\textcolor{blue}{\boldsymbol{-2.15^{*}}}$ & $-2.27^{*}$ \\
6 & $-2.75^{*}$ & $-3.52^{*}$ & $-3.52^{*}$ & $-1.70^{*}$ & $-2.09^{*}$ & $-2.83^{*}$ & $-5.95$ & $\textcolor{blue}{\boldsymbol{-8.54^{*}}}$ & $-9.26^{*}$ & $-1.48$ & $\textcolor{blue}{\boldsymbol{-2.06^{*}}}$ & $-2.79^{*}$ \\
7 & $-2.92^{*}$ & $-3.85^{*}$ & $-3.85^{*}$ & $-2.09^{*}$ & $-2.62^{*}$ & $-3.15^{*}$ & $-5.46$ & $-7.75$ & $-9.73^{*}$ & $-2.30$ & $\textcolor{blue}{\boldsymbol{-3.14^{*}}}$ & $-3.14^{*}$ \\
8 & $-3.61^{*}$ & $-4.08^{*}$ & $-4.08^{*}$ & $-2.06^{*}$ & $-2.57^{*}$ & $-3.43^{*}$ & $-8.14$ & $\textcolor{blue}{\boldsymbol{-9.79^{*}}}$ & $-9.79^{*}$ & $-3.27^{*}$ & $-3.35^{*}$ & $-3.35^{*}$ \\
9 & $-3.85^{*}$ & $-4.22^{*}$ & $-4.22^{*}$ & $-2.24^{*}$ & $-2.85^{*}$ & $-3.64^{*}$ & $-9.28$ & $\textcolor{blue}{\boldsymbol{-9.63^{*}}}$ & $-9.63^{*}$ & $-3.97^{*}$ & $-3.42^{*}$ & $-3.42^{*}$ \\
10 & $-3.92^{*}$ & $-4.29^{*}$ & $-4.29^{*}$ & $-1.89$ & $\textcolor{blue}{\boldsymbol{-2.41^{*}}}$ & $-3.82^{*}$ & $-8.67$ & $\textcolor{blue}{\boldsymbol{-9.18^{*}}}$ & $-9.18^{*}$ & $-3.98^{*}$ & $-3.38^{*}$ & $-3.38^{*}$ \\
11 & $-3.17^{*}$ & $-4.30^{*}$ & $-4.30^{*}$ & $-1.60$ & $-2.11$ & $-3.96^{*}$ & $-6.25$ & $\textcolor{blue}{\boldsymbol{-8.53^{*}}}$ & $-8.53^{*}$ & $-3.78^{*}$ & $-3.23^{*}$ & $-3.23^{*}$ \\
12 & $-2.01$ & $\textcolor{blue}{\boldsymbol{-2.99^{*}}}$ & $-4.28^{*}$ & $-1.02$ & $-1.47$ & $-4.08^{*}$ & $-3.26$ & $-6.99$ & $-7.79^{*}$ & $-3.30$ & $\textcolor{blue}{\boldsymbol{-3.00^{*}}}$ & $-3.00^{*}$ \\
13 & $-1.61$ & $-2.54$ & $-4.24^{*}$ & $-1.27$ & $-1.79$ & $-4.19^{*}$ & $-0.67$ & $-3.54$ & $-7.01^{*}$ & $-3.13$ & $\textcolor{blue}{\boldsymbol{-2.72^{*}}}$ & $-2.72^{*}$ \\
14 & $-1.20$ & $-2.10$ & $-4.20^{*}$ & $-1.08$ & $-1.61$ & $-4.28^{*}$ & $0.41$ & $-2.50$ & $-6.26$ & $-2.83$ & $-2.41$ & $-2.41$ \\
15 & $-0.69$ & $-1.52$ & $-4.16^{*}$ & $-0.91$ & $-1.45$ & $-4.38^{*}$ & $1.12$ & $-1.91$ & $-5.56$ & $-2.26$ & $-2.08$ & $-2.08$ \\
16 & $-0.26$ & $-1.04$ & $-4.14^{*}$ & $-0.97$ & $-1.56$ & $-4.47^{*}$ & $2.17$ & $-0.76$ & $-4.96$ & $-1.95$ & $-1.76$ & $-1.76$ \\
\hline\hline
\end{tabular}}
\caption{Impulse Responses to 100 bp Federal Funds Rate Shock (bold blue indicates TLP(10,4) significance where LP(10) is insignificant; a star indicates 90\% significance). The table is truncated to include the first 16 quarters.}
\label{tab:irf_sig104}
\end{table}

\section{Conclusion}
\label{sec:conclusion}

This paper introduces targeted local projections, a frequentist shrinkage estimator 
that averages across LP and VAR impulse responses with data-driven, horizon-specific weights. 
By shrinking LP toward VAR at each horizon, TLP reduces variance at longer horizons 
where LP estimates become noisy, while the optimal weights ensure that the estimator 
places more weight on the asymptotically unbiased LP when the discrepancy between estimators is 
large. The shrinkage parameter has a closed-form solution, making implementation 
straightforward. For inference, we propose the Mean Symmetric Double Bootstrap, which improves 
coverage relative to the standard single bootstrap procedures. 

\vspace{0.2in}

\noindent \textbf{Acknowledgements. }We are grateful for useful comments from Luca Fanelli, Mikkel Plagborg-M{\o}ller, Elena Pesavento, and from the participants at the Structural Econometrics Group seminar in Tilburg 2026 and at the Netherlands Econometrics Study Group and the IAAE conferences in 2026.

\vspace{0.2in}

\noindent \textbf{Data availability and AI usage.} Code and data are available at \url{https://github.com/anemtyrev/Targeted_Local_Projections}. AI tools used: Claude, Claude Code, ChatGPT, Codex. AI usage: coding, proofreading. 

\small

 \nocite{racine1997feasible}
\bibliographystyle{chicago}
\bibliography{bibl}

\newpage 

\appendix
 \normalsize
 \section{The Importance of Projecting Out Controls in Regularized Local Projections}
\label{app:controls}

All estimators discussed in this paper—$\hat{\beta}^{LP}_h$, $\hat{\beta}^{TLP}_h$, 
$\hat{\beta}^{SLP}_h$, and $\hat{\beta}^{VAR}_h$—are scalars. This is achieved by 
projecting out control variables $W$ prior to estimation. Without this step, 
regularization becomes problematic for two reasons.

First, if we apply uniform regularization to all parameters including controls, substantial further derivations are required to identify the target for the control coefficients. 

Second, if we regularize only the parameter of interest while setting zero targets 
for controls, the objective function for TLP becomes
\begin{align}
    Q_n(\theta) = \|\check{Y}_h - \theta_h \check{X}\|^2 
    + \tilde{\lambda} \|\theta_h - \hat{\theta}_h^{VAR}\|^2_\Lambda
\end{align}
where $\check{X} = [Y_2 \; W]$, $\check{Y}_h = Y_{1,h}$, and 
\begin{align}
    \hat{\theta}_h^{VAR} = \begin{bmatrix}
        \hat{\beta}_h^{VAR} \\
        0 \\ 
        \vdots \\
        0
    \end{bmatrix}_{(pk+1)\times 1}, \quad
    \Lambda = \begin{bmatrix}
        1 & 0 & \cdots & 0 \\
        0 & 0 & \cdots & 0 \\
        \vdots & \vdots & \ddots & \vdots \\
        0 & 0 & \cdots & 0
    \end{bmatrix}_{(pk+1)\times (pk+1)}.
\end{align}
The solution is
\begin{align}
    \hat{\theta}^{\text{TLP}}_h = {\phi}(\tilde{\lambda}) \hat{\theta}_h^{LP} 
    + (I_{pk+1} - {\phi}(\tilde{\lambda})) \hat{\theta}_h^{VAR},
\end{align}
where ${\phi}(\tilde{\lambda}) = (\check{X}' \check{X} + \tilde{\lambda} \Lambda)^{-1}(\check{X}' \check{X})$. 
The matrix ${\phi}(\tilde{\lambda})$ is not diagonal, so any $\tilde{\lambda} \neq 0$ introduces 
bias into the control coefficients. 

\cite{yamada2017frisch} shows that projecting out controls preserves both the 
parameter of interest and its variance, up to a scaling factor, in ridge regression (but not in generalized ridge regression). 
This approach eliminates the targeting ambiguity and prevents cross-contamination 
between the parameter of interest and controls. For this reason, projecting out controls is 
essential for both TLP and SLP estimators.

\section{Proofs} \label{app:proofs}

\noindent \textbf{Lemma 1.} Under Assumptions 1 and 2, $y_t$, $u_t= \Gamma(\epsilon_t + T^{-\zeta} \alpha(L) \epsilon_t)$, and $u_t-\Gamma \epsilon_t$ are $\mathcal L_d$ near-epoch dependent (NED) of size $m^{-1/2}$ on the $\alpha$-mixing process $\epsilon_t$ of size $-r/(r-2)$, where recall $d=4+2\delta$, and $r=2+\delta=d/2$. \\

\noindent \textbf{Proof of Lemma 1.} Under Assumption 1, $y_t$ can be inverted into a VMA($\infty$). Letting $\alpha(L)=\sum_{\ell=1}^{\infty} \alpha_\ell L^{\ell}$, with the convention $\alpha_0=0$,

\begin{equation*}
    y_t = \sum_{j=0}^{\infty} (A^j \Gamma + T^{-\zeta} A^j \Gamma \alpha_j) \epsilon_{t-j} \equiv \sum_{j=0}^{\infty} B_{T,j} \epsilon_{t-j}.
\end{equation*}
By Assumption 1, 
$
    \sup_t E \|\epsilon_t\|_d < M,
$
for a generic constant $M$ - which we use throughout to denote various uniform bounds - 
and the VMA coefficients are absolutely summable 
$
    \sup_T \sum_{j=0}^{\infty}\|B_{T,j}\| < \infty,
$
where $\| \cdot \|$ denotes the Euclidean norm for vectors, and the sup-norm for matrices.
Since 
$
     E(y_t|\mathcal F_{t-m}^{t+m}) = E(y_t|\mathcal F_{t-m})
    =
    \sum_{j=0}^{m} B_{T,j}\epsilon_{t-j},
$
\begin{align*}
    \|y_t-E(y_t|\mathcal F_{t-m}^{t+m})\|_d
    &=
    \left(
    \mathbb E
    \left\|
    \sum_{j=m+1}^{\infty} B_{T,j}\epsilon_{t-j}
    \right\|^d
    \right)^{1/d} \le
     \left(\sum_{j=m+1}^{\infty}
    \|B_{T,j}\|^d
    \mathbb E \|\epsilon_{t-j}\|^d
    \right)^{1/d} \\
    &\le
    \left(M
    \sum_{j=m+1}^{\infty}\|B_{T,j}\|^d\right)^{1/d} = M^{1/d} \left(
    \sum_{j=m+1}^{\infty}\|B_{T,j}\|^d\right)^{1/d}. 
\end{align*}
Since $||B_{T,j}||$ are absolutely summable, they are also $d>4$ summable. Moreover, $
    \sum_{j=m+1}^{\infty}\|B_{T,j}\|^d \leq \| \Gamma\|^d\left(\sum_{j=m+1}^{\infty}\|A^{jd}\| \times \|1+ \alpha_j T^{-\zeta} \|^d \right) \leq M \left(\sum_{j=m+1}^{\infty}\|A^j\|^d \right)$. Since $\sum_{j=m+1}^{\infty}\|A^j\|^d  \rightarrow 0$ as $m\rightarrow \infty$, $y_t$ is $\mathcal L_d$ NED on $\epsilon_t$,
with approximation constants $c_t=M$ and  $\nu_{m,1} = \left(
    \sum_{j=m+1}^{\infty}\|B_{T,j}\|^d\right)^{1/d} $. Since $A^m$ decays faster than any polynomial rate, $y_t$ is $\mathcal L_d$ NED of every finite polynomial size, therefore also of size $m^{-1}$ and $m^{-1/2}$. The result for $u_t$ can be obtained following the same steps as for $y_t$ but without needing the VMA($\infty$) inversion. Noting that $
     E(u_t|\mathcal F_{t-m}^{t+m}) = E(u_t|\mathcal F_{t-m})
    =
    \sum_{j=0}^{m} \Gamma (1+T^{-\zeta}) \alpha_j\epsilon_{t-j}$,
we have:
\begin{align*}
   \|u_t-E(u_t|\mathcal F_{t-m}^{t+m})\|_d
    &=
    \left(
    \mathbb E
    \left\|
    \sum_{j=m+1}^{\infty} \Gamma (1+T^{-\zeta}) \alpha_j\epsilon_{t-j}
    \right\|^d
    \right)^{1/d} \\
    &\le
     \left(\sum_{j=m+1}^{\infty}
    \|\Gamma (1+T^{-\zeta}) \alpha_j\|^d
    \mathbb E \|\epsilon_{t-j}\|^d
    \right)^{1/d} \le M
    \left(
    \sum_{j=m+1}^{\infty}\|\alpha_j\|^d\right)^{1/d} \\
    &\equiv M \nu_{m,2}.
\end{align*}
Therefore, $u_t$ is $\mathcal L_d$ NED on the $\alpha$-mixing process $\epsilon_t$, with approximation constants 
 $c_t= M$ and $\nu_{m,2}$ which also decay exponentially by Assumption 1; hence, the NED order of $u_t$ is also $m^{-1}$ or just $m^{-1/2}$. Moreover, $u_t-\Gamma \epsilon_t$ inherits the dependence properties of $u_t$ by similar derivations as above. $\square$

\noindent \textbf{Lemma 2 (WLLN).} If $\{h_t\}$ is a mean zero random vector and $\mathcal L_2$ NED on an $\alpha$-mixing process $v_t$ of size $\nu_m=m^{-1/2}$ with constants $c_t=M$ fixed for a generic $M$, with $\sup_t \mathbb E|v_{t,i}|^{2+\delta} < M<\infty$, where $v_{t,i}$ is the $i^{th}$ element of $v_t$, then $T^{-1} \sum_{t=1}^T h_t \stackrel{p}{\rightarrow} 0.$ 

\textbf{Proof of Lemma 2.} From Proposition 2.9 in \cite{wooldridge1988some}, $h_t$ is an $\mathcal L_2$ mixingale of size $-1/2$ with constants $c_t=M$  and $\psi_m=5 \alpha_{[m/2]}^{1/2-1/d} +\nu{[m/2]}$. Hence, it is also an $\mathcal L_1$ mixingale of the same size with the same constants, and $\lim \sum_{T=1}^T c_t =M<\infty$. Hence, by \cite{andrews1988laws}, Theorem 1, the desired result follows. $\square$

\noindent \textbf{Lemma 3 (CLT).} If $\{h_t\}$ is a mean zero random vector and $\mathcal L_2$ NED on an $\alpha$-mixing process $v_t$ of size $-r^*/(r^*+2)$, with $\sup_t \mathbb E|v_{t,i}|^{r^*} < M<\infty$ for some $r^*>2$, where $v_{t,i}$ is the $i^{th}$ element of $v_t$, and the long-run variance exists and it is positive definite: $\lim_{T\rightarrow \infty} \mbox{Var} \left(T^{-1/2} \sum_{t=1}^T h_t\right) = V$, then $T^{-1/2} \sum_{t=1}^T h_t \stackrel{d}{\rightarrow} \mathcal N(0, V)$.

\textbf{Proof of Lemma 3}. Since $h_t$ is an $\mathcal L_1$ mixingale of size $-1/2$ with constants defined as in the proof of Lemma 2, the result follows by applying Theorem 2.11 in \cite{wooldridge1988some}, as already verified in \cite{hall2012inference}.

\noindent \textbf{Proof of Theorem 1. } 
The asymptotic bias for the VAR estimator was derived in \cite{olea2024double} under Assumption 1. For the LP estimator, consider the regression coefficient $\hat \theta_h$ on all coefficients, obtained with regression model:
\begin{align*}
    y_{i,t+h} = y_{j,t} \beta_h + w_{t}'\gamma_h +  u_{t+h}^h \equiv z_t ' \theta_h +  u_{t+h}^h,
\end{align*}
where $w_t$ contains all the $p$ lags of $y_t$ included in LP, $z_t = (y_{j,t}, w_t')'$, $\theta_h = (\beta_h, \gamma_h')'$, $\gamma_h$ are functions of the original VAR coefficients obtained by backward substituting the model in \eqref{DGP}, and $u_{t+h}^h$ are the errors which can be shown, by backward substitution, to be linear combinations of $u_{t+h}, \ldots, u_t$, the reduced-form errors of the VAR. Denote this linear combination by 
\begin{align}
 u_{t+h}^h  = \sum_{n=0}^h \Phi_n' u_{t+h-n}   = \sum_{n=0}^h \Phi_n' \Gamma \epsilon_{t+h-n} + T^{-\zeta} \sum_{n=0}^h \Phi_n' \Gamma \alpha(L) \epsilon_{t+h-n}.
\end{align}
Note that
\begin{align}\label{eq:LPdecomp}
    \hat \theta_{h}^{LP} &= \theta_h + \left( T^{-1} \sum_{t=1}^T z_t z_t'\right)^{-1} \left( T^{-1} \sum_{t=1}^T z_t  u_{t+h}^h\right) \\ \nonumber
    & = \theta_h + \left( T^{-1} \sum_{t=1}^T z_t z_t'\right)^{-1} \left( T^{-1/2} \sum_{n=0}^h z_t \Phi_n' \Gamma \epsilon_{t+h-n} + T^{-1/2-\zeta} \sum_{n=0}^h z_t \Phi_n' \Gamma \alpha(L) \epsilon_{t+h-n}\right).
\end{align}
Since $\epsilon_t$ is MDS by Assumption 1, and $z_t$ either contains lags of $y_t$ or $y_{j,t}$, while $\Gamma$ is a lower-triangular matrix, $\mathbb E(z_t \Phi_n' \Gamma\epsilon_{t+h}) = 0$ for all $h\geq 0$, hence 
$$
\mathbb E\left( T^{-1/2} \sum_{n=0}^h z_t \Phi_n' \Gamma \epsilon_{t+h-n}\right) =0. 
$$
Let $z_{i,t}$ be the $i^{th}$ element of $z_t$, and denote by $\Psi_{n,\ell,k^*}$ the $(k^*)^{th}$ element of $\Psi_{n,\ell}\equiv \Phi_n' \Gamma \alpha_\ell$, where $\alpha(L)=\sum_{\ell=1}^{\infty} \alpha_\ell L^\ell$, and $L$ is the lag operator. Then:
\begin{align*} 
T^{-1/2-\zeta} \sum_{n=0}^h \mathbb E (z_t \Phi_n' \Gamma \alpha(L) \epsilon_{t+h-n})  &\leq\mathbb E \left |T^{-1/2-\zeta} \sum_{n=0}^h z_{i,t} \sum_{\ell=1}^{\infty} \Psi_{n,\ell} \epsilon_{t+h-n-\ell} \right | \\
&\leq T^{-1/2-\zeta} \sum_{n=0}^h \sum_{\ell=1}^{\infty} \mathbb E |z_{i,t} \Psi_{n,\ell} \epsilon_{t+h-n-\ell} | \\
&\leq T^{-1/2-\zeta} \sum_{n=0}^h \sum_{\ell=1}^{\infty} \mathbb E \left |\sum_{k^*=1}^k z_{i,t}\Psi_{n,\ell,k^*} \epsilon_{k^*,t+h-n-\ell} \right| \\
& \leq T^{-1/2-\zeta}  \sum_{n=0}^h \sum_{\ell=1}^{\infty} \sum_{k^*=1}^k \times |\Psi_{n,\ell,k^*}| \sup_{t,i} \|z_{i,t}\|_2 \sup_{t,k^*}\|\epsilon_{k^*,t}\|_2.
\end{align*}
By Assumption 1 (i), 
$\sup_{t,k^*} |\epsilon_{k^*,t}|_2 < M$. Also, 
\begin{align}
 \sup_t \|y_t\|_2 &\leq \sum_{j=0}^{\infty} \| B_{T,j} \| \times \sup_t \|\epsilon_t\|_2 = \sum_{j=0}^{\infty} \| B_{T,j} \| \times \left(\mathbb E \left(\sum_{k',k^*=1}^ k \epsilon_{k',t} \epsilon_{k^*,t} \right)^2\right)^{1/2} \\
 & = \sum_{j=0}^{\infty} \| B_{T,j} \| \times \left( \sum_{k^*=1}^ k \mathbb E(\epsilon_{k^*,t}^2) \right)^{1/2} = \sum_{j=0}^{\infty} \| B_{T,j} \| \times
 \left(\sum_{k^*=1}^k \sigma_{k^*}^2\right)^{1/2} <M,
\end{align}
where the first inequality is the triangle inequality, the second and third equalities follow from Assumption 1(i), and the last inequality from Assumption 1 (i), (ii) and (v). Hence,
$\sup_t \|y_t\|_2 < M$, and 
\begin{align*} \mathbb E \left |T^{-1/2-\zeta} \sum_{n=0}^h z_{i,t} \sum_{\ell=1}^{\infty} \Psi_{n,\ell} \epsilon_{t+h-n-\ell} \right | & \leq T^{-1/2-\zeta} k M \sum_{\ell=1}^{\infty} \max_{k^*} |\Psi_{n,\ell,k^*}|  = o(T^{-1/2}),
\end{align*}
where the equality follows from $|\Psi_{n,\ell,k^*}|$ being absolutely summable. Hence,
\begin{equation}\label{eq:crossmomzu}
A \equiv \mathbb E \left(T^{-1} \sum_{t=1}^T z_t  u_{t+h}^h\right) = o(T^{-1/2}).
\end{equation}

We now verify Lemma 2, WLLN, for $T^{-1} \sum_{t=1}^T z_t z_t'$. Because $y_t$ is $\mathcal L_d$ NED of size $-1/2$ on $\epsilon_t$ by Lemma 1, so is $z_t$, and  so $z_t z_t'$ but with lower norm $\mathcal L_{r}$ NED, by Theorems 17.8-17.10 in \cite{davidson1994stochastic}, $r=2+\delta$, hence also with $\mathcal L_2$ norm. By H\"older's inequality, $\sup_t E| z_{i,t} z_{i',t}|^{2+\delta}< M$, with $z_{i,t}$ the $i^{th}$ element of $z_t$. We now show that $\mathbb E(y_t y_{t-\ell}') = E(\tilde y_t \tilde y_{t-\ell}) + o(1)$. Since 
$y_t = \tilde y_t + T^{-\zeta}\bar u_t$, where $\bar u_t = \Gamma \alpha(L) \epsilon_t$, and recall that $\tilde y_t$ is the process in \eqref{DGP} purged of misspecification, hence setting $\zeta=0$. Then we have:
\begin{align*}
    \mathbb E(y_t y_{t-\ell}) &=\mathbb E [(\tilde y_t + T^{-\zeta}\bar u_t)(\tilde y_{t-\ell} + T^{-\zeta}\bar u_{t-\ell})]  \\&= \mathbb E( \tilde y_t \tilde y_{t-\ell}) + T^{-\zeta}\mathbb E (y_t \bar u_{t-\ell}) + T^{-\zeta}\mathbb E (y_{t-\ell} \bar u_t)  + T^{-2\zeta}\mathbb E (y_{t-\ell} \bar u_{t-\ell}).
\end{align*}
By similar arguments as for $z_t \Phi_n' \Gamma \epsilon_{t+h-n}$,
$\sup_{t,\ell}\mathbb E\|y_{t-\ell} \bar u_t\| < M $, 
$\sup_{t,\ell}\mathbb E \| y_t \bar u_{t-\ell}\| <M$, and $\sup_{t,\ell}\mathbb E \|y_{t-\ell} \bar u_{t-\ell}\| <M$. Hence, $ \mathbb E(y_t y_{t-\ell}) = \mathbb E(\tilde y_t \tilde y_{t-\ell}) +o(1)$, and because $z_t$ contains $y_{j,t}$ and lags of $y_t$,
\begin{equation*}
    \mathbb E(z_t z_t') = \mathbb E(\tilde z_t \tilde z_t') +o(1) \equiv \tilde S_q + o(1), 
\end{equation*}
where $\tilde z_t$ is the process with lags of $\tilde y_t$ replacing lags of $y_t$, and $\tilde S_q$ originates from calculating the autocovariances of $\tilde y_t$, a stationary VAR(1) process by Assumption 1. 
Hence, by the WLLN in Lemma 2,
\begin{equation}\label{eq:secondmomz}
    B \equiv T^{-1} \sum_{t=1}^T z_t z_t' \stackrel{p}{\rightarrow} \tilde S_q.
\end{equation}

Now $E(B \times A) = E(B) \times E(A) + Cov (B, A)$, and $|Cov(B,A)| \leq \sqrt{Var(B) \times Var(A)}$. By \eqref{eq:secondmomz}, $Var(B) = o(1)$, and by the CLT shown in the proof of Theorem 2 below, $T^{1/2} Var(A) = O(1)$. Hence, using \eqref{eq:crossmomzu} into \eqref{eq:LPdecomp}, $E(B \times A) = \tilde S_q \times o(T^{-1/2}) + o(1) \times O(T^{-1/2}) = o(T^{-1/2})$, and we obtain the desired result:
\begin{equation*}
\mathbb E (\hat \beta_h^{LP}) = \beta_h + o(T^{-1/2}). \qquad \square
\end{equation*}

\noindent \textbf{Proof of Theorem 2. } 
We first derive the asymptotic distribution for the LP estimator. Since we already have the LP estimator decomposition in \eqref{eq:LPdecomp}, and we know $T^{-1} \sum_{t=1}^T z_t z_t' \stackrel{p}{\rightarrow} \tilde S_q$ by \eqref{eq:secondmomz}, applying the CLT in Lemma 3 will deliver the asymptotic distribution of $T^{1/2}(\hat \beta_h^{LP} - \beta_h).$ Therefore, we now verify the conditions in Lemma 3 for 
\begin{equation*}
 z_t u_{t+h}^h =  \sum_{n=0}^h \Phi_n' u_{t+h-n} =\sum_{n=0}^h z_t \Phi_n' \Gamma (\epsilon_{t+h-n} + T^{-1/2}\alpha(L) \epsilon_{t+h-n}).
\end{equation*} 

By Lemma 1, $u_t$ is $\mathcal L_d$ NED of size $m^{-1/2}$ on $\epsilon_t$, hence so is $u_{t+h}^h$ by Theorem 17.8 in \cite{davidson1994stochastic}. Also, since $y_t$ shares this NED property, so does $z_t$, and, by Theorem 17.10 in \cite{davidson1994stochastic}, so does $z_t u_{t+h}^h$ except it is $\mathcal L_2$ NED (actually, $\mathcal L_{r}$, but   $r=2+\delta>2$, so $\mathcal L_2$ suffices).  By arguments similar to bounding $\mathbb E\|y_t\|^2$ in the proof of Theorem 1, one can show that $\sup_t \mathbb E |z_{i,t}  u_{t+h}^h|^{r}< M$, where $r=d/2=2+\delta$. We now derive the long-run variance of $z_t u_{t+h}^h$. To that end, consider first the long-run variance of $y_{j,t} u_{t+h}^h$. 
\begin{align*}
    \lim_{T\rightarrow \infty} \mbox{Var} (T^{-1/2}\sum_{t=1}^T y_{j,t}  u_{t+h}^h)  & = T^{-1} \sum_{t,s=1}^T \mathbb E(y_{j,t}  u_{t+h}^h  u_{s+h}^h y_{j,s} ).
\end{align*}
Recall that $y_{j,t} = \tilde y_{j,t} + T^{-1/2} \bar u_t$, and let $u_t = \tilde u_t + \bar v_t$, where $\tilde u_t = \Gamma \epsilon_t$, and $\bar v_t = \Gamma \sum_{l=1}^{\infty} \alpha_\ell \epsilon_{t-\ell}$. By the same arguments as for the proof of Lemma 1, $\tilde u_t$, $\tilde y_t$, $\bar u_t$, $\bar v_t$ are $\mathcal L_d$ NED of the same size on an $\alpha$-mixing process (in fact, $\tilde u_t$ is also of smaller order because it is a martingale difference by Assumption 1). Therefore, so are their cross-products, but by Theorem 17.10 in \cite{davidson1994stochastic}, the relevant NED norm for the cross-products is $\mathcal L_{1+\delta/2}$. So, the terms in $\sum_{t,s=1}^T\mathbb E(y_{j,t}  u_{t+h}^h u_{s+h}^h y_{j,s} ) = \sum_{n,n^*=0}^{h} \Phi_n'\mathbb E(y_{j,t} u_{t+h-n} u_{s+h-n^*}^h y_{j,s})\Phi_{n^*}$, pre-multiplied by $T^{-1/2}$, are necessarily of a smaller order than \\ $\sum_{t,s=1}^T\sum_{n,n^*=0}^{h} \Phi_n'\mathbb E(\tilde y_{j,t} \tilde u_{t+h-n} \tilde u_{s+h-n^*}^h y_{j,s})\Phi_{n^*}$. 

It remains to show that $\sum_{t,s=1}^T\sum_{n,n^*=0}^{h}  \Phi_n' \mathbb E(\tilde y_{j,t} \tilde u_{t+h-n} \tilde u_{s+h-n^*}^h \tilde y_{j,s}) \Phi_{n^*}$ exists. 
By Lemma 1 and Theorem 17.9, the product $y_{j,t} \tilde u_{t+h-n}$ is also an $\mathcal L_2$ NED sequence on an $\alpha$-mixing process of size $m^{-r/(r-2)}$, with moments $r+\delta> r$ existing by Assumption 1 and H\"older's inequality. Therefore, setting $r=2+\delta$ in Theorem 17.7 in \cite{davidson1994stochastic}, we obtain that $\sum_{t,s=1}^T\sum_{n,n^*=0}^{h} \Phi_n'\mathbb E(\tilde y_{j,t} \tilde u_{t+h-n} \tilde u_{s+h-n^*}^h \tilde y_{j,s}) \Phi_{n^*}$. Hence,
\begin{align*}
    \lim_{T\rightarrow \infty} \mbox{Var} (T^{-1/2}\sum_{t=1}^T y_{j,t} u_{t+h}^h) = \sum_{n,n^*=0}^{h}  \Phi_n' \left[T^{-1} \sum_{t,s=1}^T E(\tilde y_{j,t} \tilde u_{t+h-n} \tilde u_{s+h-n^*}^h \tilde y_{j,s}) \right] \Phi_{n^*} \equiv V_{j,h}< \infty,
\end{align*}
Therefore, by Lemma 3 (CLT), $T^{-1/2}\sum_{t=1}^T y_{j,t} u_{t+h}^h \stackrel{d}{\rightarrow} \mathcal N(0,V_{j,h})$. 

By similar arguments, $T^{-1/2}\sum_{t=1}^T z_t u_{t+h}^h \stackrel{d}{\rightarrow} \mathcal N(0,V_h^{LP})$, where
\begin{equation*} 
V_h^{LP} = \lim_{T\rightarrow \infty} Var \left( T^{-1/2} \sum_{t=1}^T \tilde z_t \tilde u_{t+h}^h\right),
\end{equation*}
where $\tilde u_{t+h}^h = \sum_{n=0}^h \Phi_n' \tilde u_{t+h-n}$. Hence, 
\begin{equation}\label{eq:varianceLP}
    T^{1/2} (\hat \beta_h^{LP}-\beta_h) \stackrel{d}{\rightarrow}\mathcal N(0, \tilde S_q^{-1} V_h^{LP} \tilde S_q^{-1}),
\end{equation}
where $\Sigma_h^{LP} \equiv \tilde S_q^{-1} V_h^{LP} \tilde S_q^{-1}$.  Equation \eqref{eq:varianceLP} shows that the asymptotic variance of the LP estimator is unaffected by the misspecification. The VAR estimator has asymptotic bias, as shown in Theorem 1; however, for the limiting variance calculations, the asymptotic bias is irrelevant. Therefore, by similar arguments as for LP, 
\begin{equation}\label{eq:varianceVAR}
T^{1/2}(\hat \beta_h^{VAR}-\beta_h) \stackrel{d}{\rightarrow} \mathcal N(aBias_h, \Sigma_h^{VAR}),
\end{equation}
and $\Sigma_h^{VAR}$ is the asymptotic variance of standard structural VARs obtained from the data generating process $y_t = Ay_{t-1} + \Gamma \epsilon_t$, thus also unaffected by misspecification. $\square$.

\noindent \textbf{Proof of Theorem 3. } 

\textit{Risk decomposition.}
For TLP at horizon $h$, the risk function is
\begin{align}
    R_h(\lambda_h)
    &= T\,\E\!\left[\big(\hat{\beta}^{TLP}_h(\lambda_h) - \beta_h\big)^2\right] \nonumber \\
    &= \underbrace{T \big(\E[\hat{\beta}^{TLP}_h(\lambda_h)] - \beta_h\big)^2}_{\text{bias}^2(\lambda_h)}
    + \underbrace{T\,\mbox{Var}[\hat{\beta}^{TLP}_h(\lambda_h)-\beta_h]}_{\text{variance}(\lambda_h)}. \label{eq:risk}
\end{align}
Let $b_h = \plim [v(\lambda_h)]$, which exists for each $\lambda_h$ since  $T^{-1} X'X \stackrel{p}{\rightarrow} E(x_t x_t')$ by Assumptions \ref{asn:1}-\ref{asn:2} and the LLN for $\alpha$-mixing processes. Moreover, by the LLN, $\sup_{\lambda_h \in [0, \infty)}|v(\lambda_h)- b_h| \stackrel{p}{\rightarrow} 0$.
From Theorem \ref{thm:1}, the squared bias is:
\begin{align}\nonumber
   T \big(\E[\hat{\beta}^{TLP}_h(\lambda_h)] - \beta_h\big)^2 &= T \big(\E[(v(\lambda_h) \hat \beta_h^{LP}  + (1-v(\lambda_h))\hat \beta_h^{VAR} - \beta_h]\big)^2 \\ \nonumber
   & = T \big( \E[ v(\lambda_h)(\hat \beta_h^{LP}- \beta_h)] +   \E[(1- v(\lambda_h))(\hat \beta_h^{VAR}- \beta_h)] \big)^2 \\\nonumber
   & = b_h^2\, (\E (T^{1/2}[ \hat \beta_h^{LP}- \beta_h]))^2 + (1-b_h)^2  \, (\E (T^{1/2}[ \hat \beta_h^{VAR}- \beta_h]))^2 \\ \nonumber
   & + 2 b_h \,(1-b_h) \,\E[ T^{1/2}(\hat \beta_h^{LP}- \beta_h)] \, \E[ T^{1/2}(\hat \beta_h^{VAR}- \beta_h)] +o(1)\\ \nonumber
   & = b_h^2 \, \times 0 + (1-b_h)^2\,  C_h^2 +  2 b_h \,(1-b_h) \times  0 \times C_h +o(1) \\ 
   & =  (1-b_h)^2   C_h^2 +o(1).\label{eq:bias_tot} 
\end{align}
where $C_h = \textup{aBias}_h$.
From Theorem \ref{thm:2}, the variance term in the risk criterion is:
\begin{align} \nonumber
T \mbox{Var} [\hat{\beta}^{TLP}_h(\lambda_h)]&= T \mbox{Var} [b_h\hat{\beta}^{LP}_h(\lambda_h) + (1-b_h) \hat{\beta}^{VAR}_h(\lambda_h)] \\ \nonumber
& = b_h^2 \, \mbox{Var} (T^{1/2}[\hat{\beta}^{LP}_h(\lambda_h) -\beta_h ]) + (1-b_h)^2 \, \mbox{Var} ( T^{1/2}[\hat{\beta}^{VAR}_h(\lambda_h) -\beta_h]) \\ \nonumber
& + 2 \, b_h\, (1-b_h) \, \mbox{Cov} (T^{1/2}[\hat{\beta}^{LP}_h(\lambda_h) -\beta_h ], T^{1/2}[\hat{\beta}^{VAR}_h(\lambda_h) -\beta_h]\, ) +o(1) \\ 
& = b_h^2 \,\Sigma_h^{LP} +  (1-b_h)^2 \,\Sigma_h^{VAR} +  2 \, b_h\, (1-b_h) \, \Sigma_h^{COV}. \label{eq:var_tot}
\end{align}
Substituting \eqref{eq:bias_tot}-\eqref{eq:var_tot} into \eqref{eq:risk}, we obtain:
\begin{align} \label{eq:risk2}
    R_h(\lambda_h) &=  (1-b_h)^2\, C_h^2 +\,[b_h^2\,\Sigma_h^{LP} +(1-b_h)^2\,\Sigma_h^{VAR} + \, 2 b_h\, (1-b_h) \, \Sigma_h^{COV} ] +o(1).
\end{align}

\textit{Empirical risk criterion.}
From Theorem \ref{thm:2},
\[
\hat{C}_h = T^{1/2}\big(\hat{\beta}_h^{LP} - \hat{\beta}_h^{VAR}\big) \stackrel{d}{\rightarrow} \mathcal N( -C_h, \Sigma_h^{LP} + \Sigma_h^{VAR} - 2 \Sigma_h^{COV}).
\]

It follows that 
\begin{align}
   \E[\hat{C}_h^2] = C_h^2 + \Sigma_h^{LP} + \Sigma_h^{VAR} - 2 \Sigma_h^{COV} +o(1).
\end{align}
Hence, $C_h^2 = \E[\hat{C}_h^2]- \Sigma_h^{LP} - \Sigma_h^{VAR} +2 \Sigma_h^{COV} +o(1)$.
Since $\hat \Sigma_h^{\mu} \stackrel{p}{\rightarrow} \Sigma_h^{\mu}$ for $\mu \in \{LP, VAR, COV\}$, 
combining the estimated bias and variance terms, the asymptotically unbiased empirical risk criterion is:
\begin{align} \nonumber
    \hat{R}_h(\lambda_h)
    &= T(1 - v(\lambda_h))^2
        (\hat{\beta}_h^{LP} - \hat{\beta}_h^{VAR})^2 - (1 - v(\lambda_h))^2 \hat \Sigma_h^{LP} - (1 - v(\lambda_h))^2 \hat \Sigma_h^{VAR} \\ \nonumber
        &+2 (1 - v(\lambda_h))^2 \hat \Sigma_h^{COV}+
        v(\lambda_h)^2 \hat \Sigma_h^{LP}
        + (1 - v(\lambda_h))^2 \hat \Sigma_h^{VAR} \\ \nonumber
       & +2 v(\lambda_h)(1 - v(\lambda_h)) \hat \Sigma_h^{COV} \\
    & = T(1 - v(\lambda_h))^2
        (\hat{\beta}_h^{LP} - \hat{\beta}_h^{VAR})^2 + (2 v(\lambda_h)-1) \hat \Sigma_h^{LP} + 2(1 - v(\lambda_h)) \hat \Sigma^{COV}_h.
\end{align}

Since $\hat R_h(\lambda_h)$ is continuous in $\lambda_h$, $\sup_{\lambda_h \in (0, \infty)} |\E[\hat R_h(\lambda_h)] - R_h(\lambda_h)| \rightarrow 0$. $\square$ \\

\noindent \textbf{Proof of Theorem 4}. 
Since there is a one-to-one mapping between $\lambda_h$ and $v(\lambda_h) = (X'X + \lambda_h)^{-1} X'X$, optimizing $\hat R_h(\lambda_h)$ with respect to $\lambda_h$ is equivalent to optimizing it with respect to $v(\lambda_h)$. Let $\hat A_h = T(\hat{\beta}_h^{LP} - \hat{\beta}_h^{VAR})^2$. Then 
\begin{align*}
  \frac{\partial \hat{R}_h(\lambda_h) }{\partial v(\lambda_h)} & = -2 (1 - v(\lambda_h)) \hat A_h +2 (\hat \Sigma_h^{LP} -\hat \Sigma_h^{COV}) =0\\
  \Leftrightarrow v(\lambda_h) \hat A_h &= \hat A_h - (\hat \Sigma_h^{LP} -\hat \Sigma_h^{COV}) \\
   \Leftrightarrow v(\lambda_h) &= 1- \frac{(\hat \Sigma_h^{LP} -\hat \Sigma_h^{COV})}{\hat A_h}.
\end{align*}
Since $\hat \Sigma_h^{LP} - \hat \Sigma_h^{COV} >0$, the weights are in finite samples less than one. However, they also need to be positive according to the definition $v(\lambda_h) = (X'X + \lambda_h)^{-1} X'X$. Therefore, the optimal weights are:
\begin{align*}
v(\hat \lambda_h) &= \mbox{max}\left [1- \frac{\hat \Sigma_h^{LP} -\hat \Sigma_h^{COV}}{\hat A_h},0\right].
\end{align*}
with $\hat\lambda_h \rightarrow \infty$ when $v(\hat \lambda_h)=0$, and 
\begin{align*}
    T^{-1} \hat \lambda_h = \frac{T^{-1}X'X (\hat \Sigma_h^{LP} -\hat \Sigma_h^{COV})}{\hat A_h- (\hat \Sigma_h^{LP} -\hat \Sigma_h^{COV})}.
\end{align*}  
otherwise. $\square$

\section{Additional simulations}\label{app:sims}
\renewcommand{\thefigure}{C\arabic{figure}}
\subsection{Larger misspecification $\eta$ \label{app:large}}
\setcounter{figure}{0}
\newpage
\begin{figure}[H]
    \centering
    \includegraphics[width=\linewidth, height=0.90\textheight, keepaspectratio]{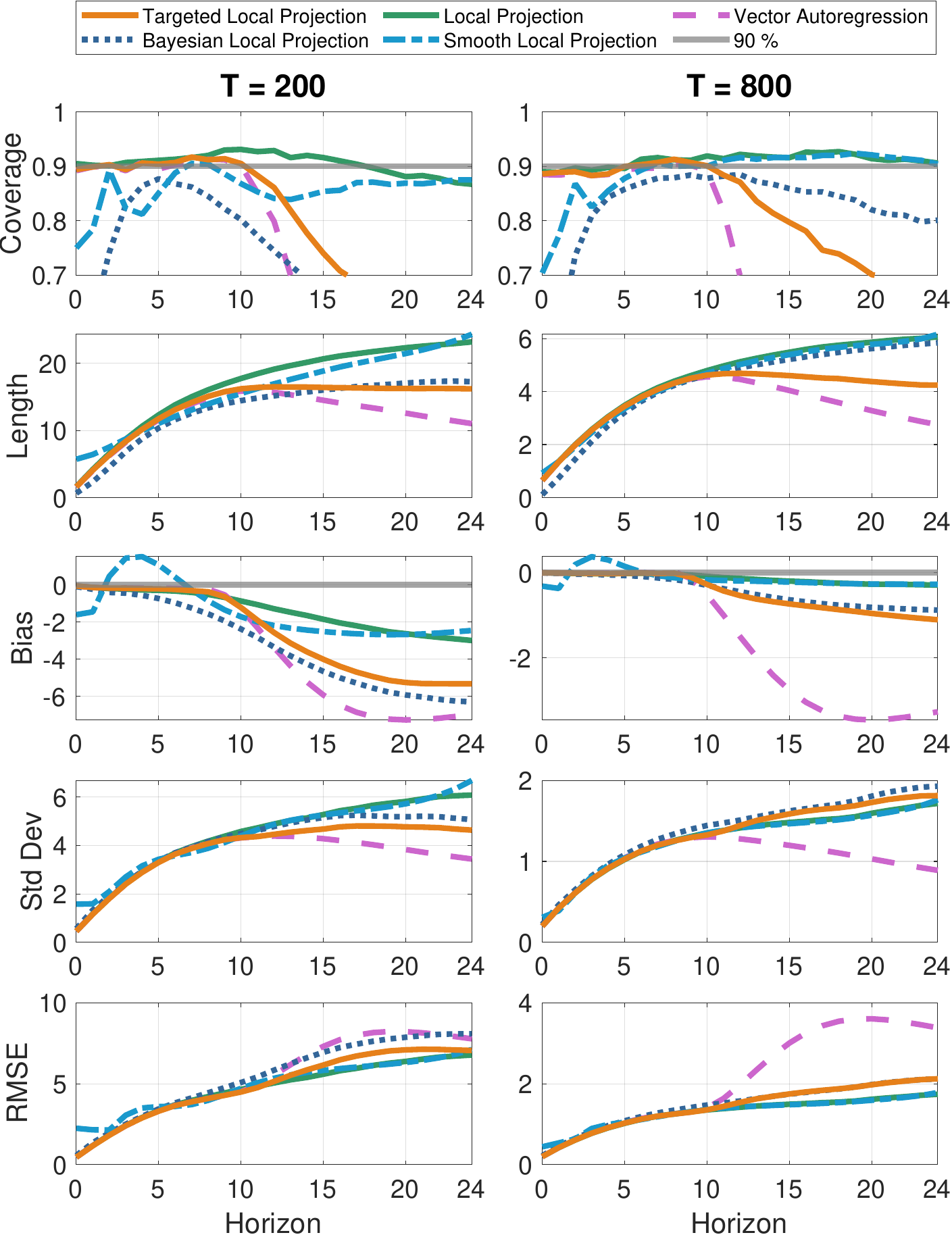}
    \caption{Coverage, length, bias, standard deviation, and RMSE for $T=200$ (left) and $T=800$ (right), DGP = VARMA(1,$100$), $\eta=8$, conditionally homoskedastic errors. Methods: Targeted Local Projections (10,8), Local Projections (10), Vector Autoregression (8), Smooth Local Projections (10) and Bayesian Local Projections (8,8). Inference by MSDB for TLP, LP, VAR and SLP.}
    \label{fig:varma1inftyBB8_metrics}
\end{figure}

\newpage
\begin{figure}[H]
    \centering
    \includegraphics[width=\linewidth, height=0.90\textheight, keepaspectratio]{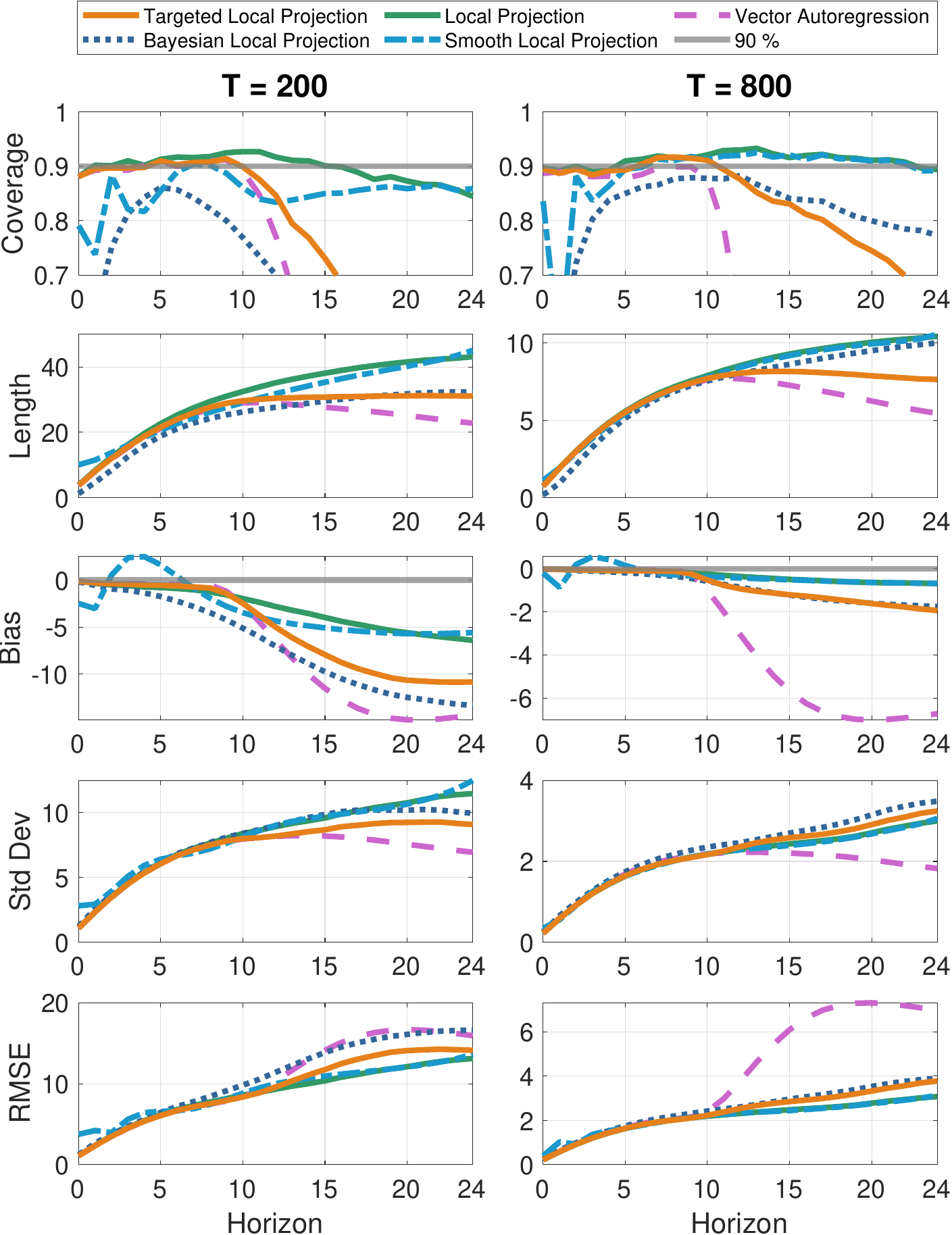}
    \caption{Coverage, length, bias, standard deviation, and RMSE for $T=200$ (left) and $T=800$ (right), DGP = VARMA(1,$100$), $\eta=16$, conditionally homoskedastic errors. Methods: Targeted Local Projections (10,8), Local Projections (10), Vector Autoregression (8), Smooth Local Projections (10) and Bayesian Local Projections (8,8). Inference by MSDB for TLP, LP, VAR and SLP.}
    \label{fig:varma1inftyBB16_metrics}
\end{figure}

\newpage
\begin{figure}[H]
    \centering
    \includegraphics[width=\linewidth, height=0.90\textheight, keepaspectratio]{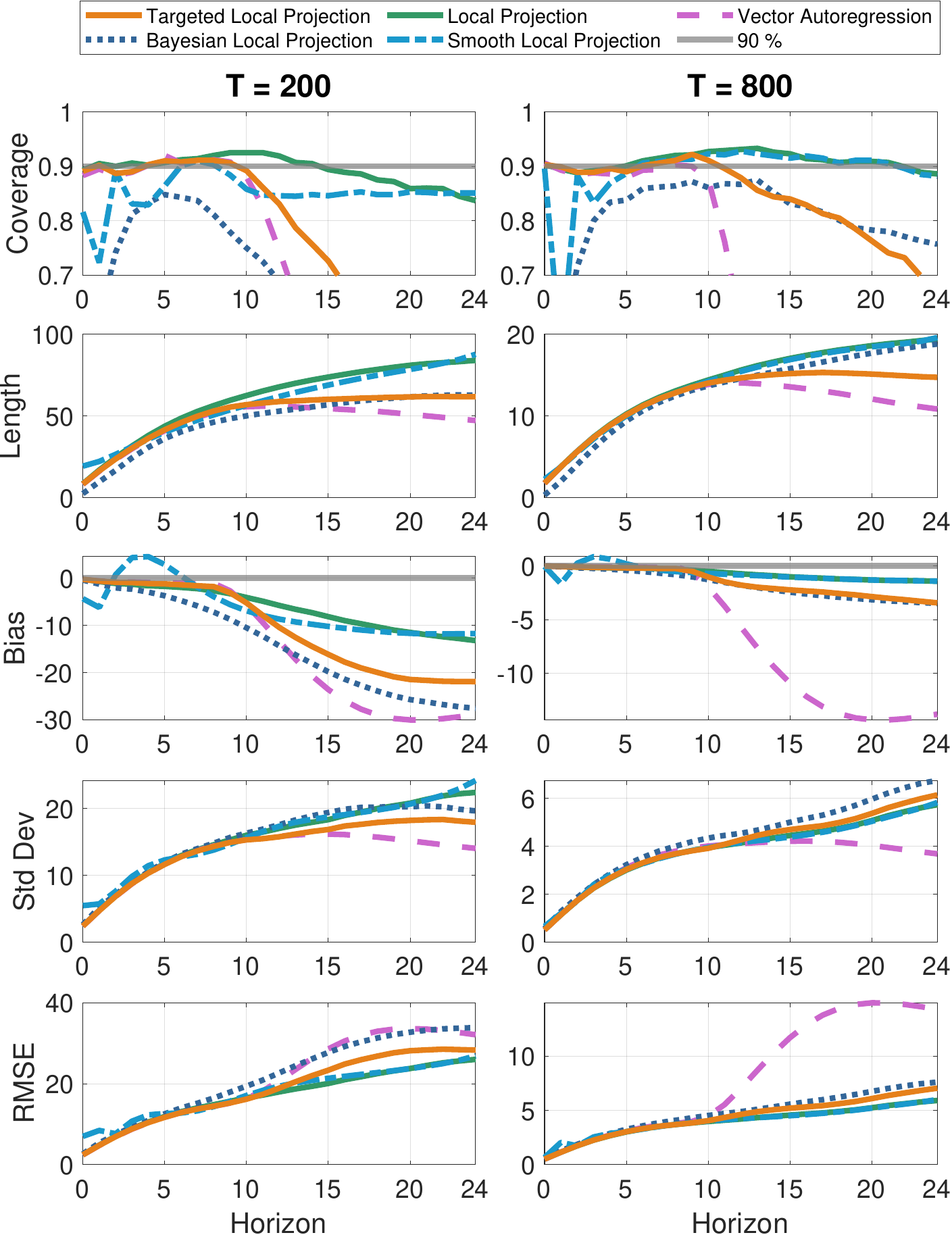}
    \caption{Coverage, length, bias, standard deviation, and RMSE for $T=200$ (left) and $T=800$ (right), DGP = VARMA(1,$100$), $\eta=32$, conditionally homoskedastic errors. Methods: Targeted Local Projections (10,8), Local Projections (10), Vector Autoregression (8), Smooth Local Projections (10) and Bayesian Local Projections (8,8). Inference by MSDB for TLP, LP, VAR and SLP.}
    \label{fig:varma1inftyBB32_metrics}
\end{figure}

\subsection{GARCH(1,1) errors\label{app:garch}}
\begin{figure}[H]
    \centering
    \includegraphics[width=0.98\linewidth, keepaspectratio]{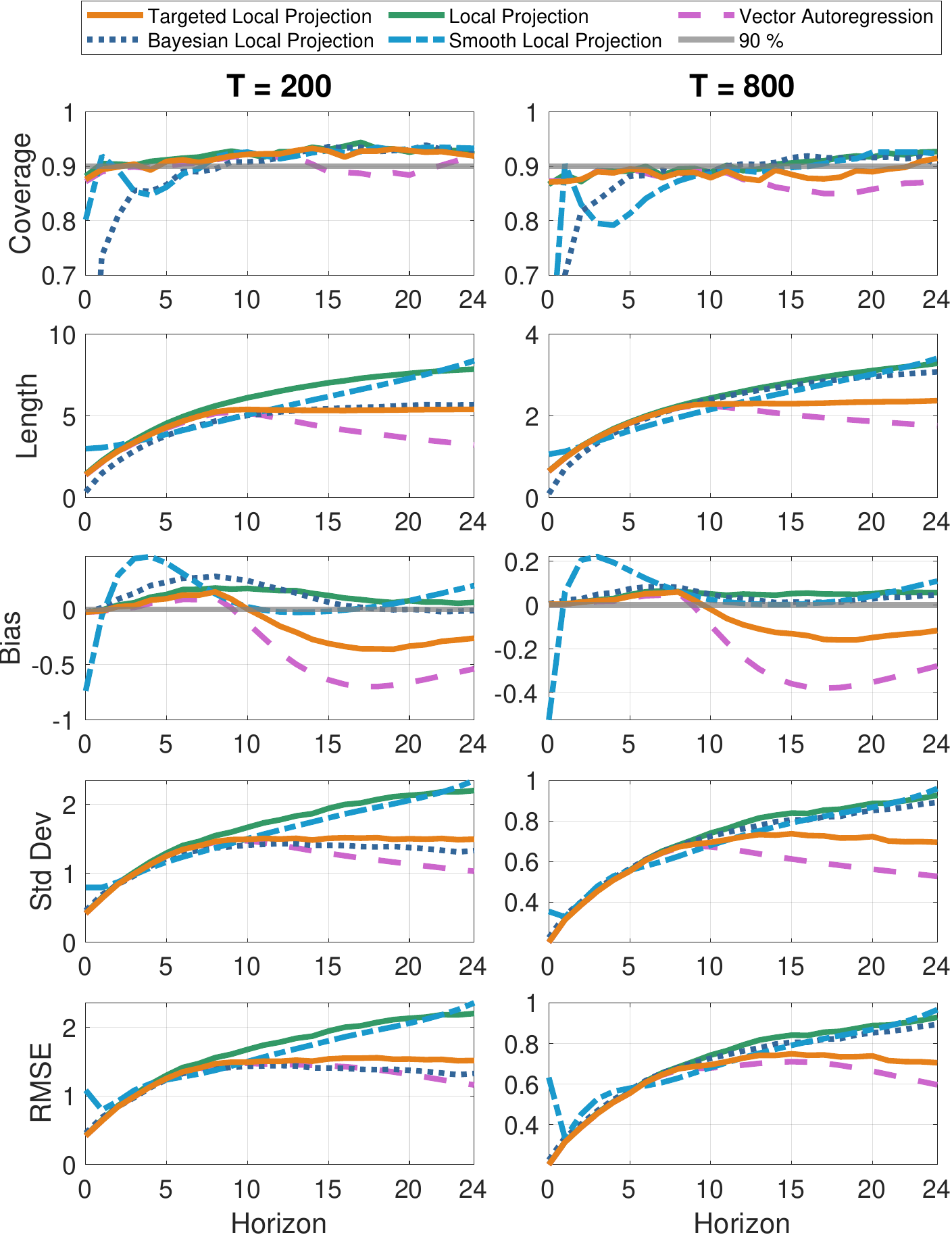}
    \caption{Coverage, length, bias, standard deviation, and RMSE for $T=200$ (left) and $T=800$ (right), DGP = VARMA(1,$100$), $\eta=1$, \textbf{GARCH(1,1) errors}. Methods: Targeted Local Projections (10,8), Local Projections (10), Vector Autoregression (8), Smooth Local Projections (10) and Bayesian Local Projections (8,8). Inference by MSDB for TLP, LP, VAR and SLP.}
    \label{fig:varma1garch_metrics}
\end{figure}
\newpage
\begin{figure}[H]
    \centering
    \includegraphics[width=\linewidth, height=0.90\textheight, keepaspectratio]{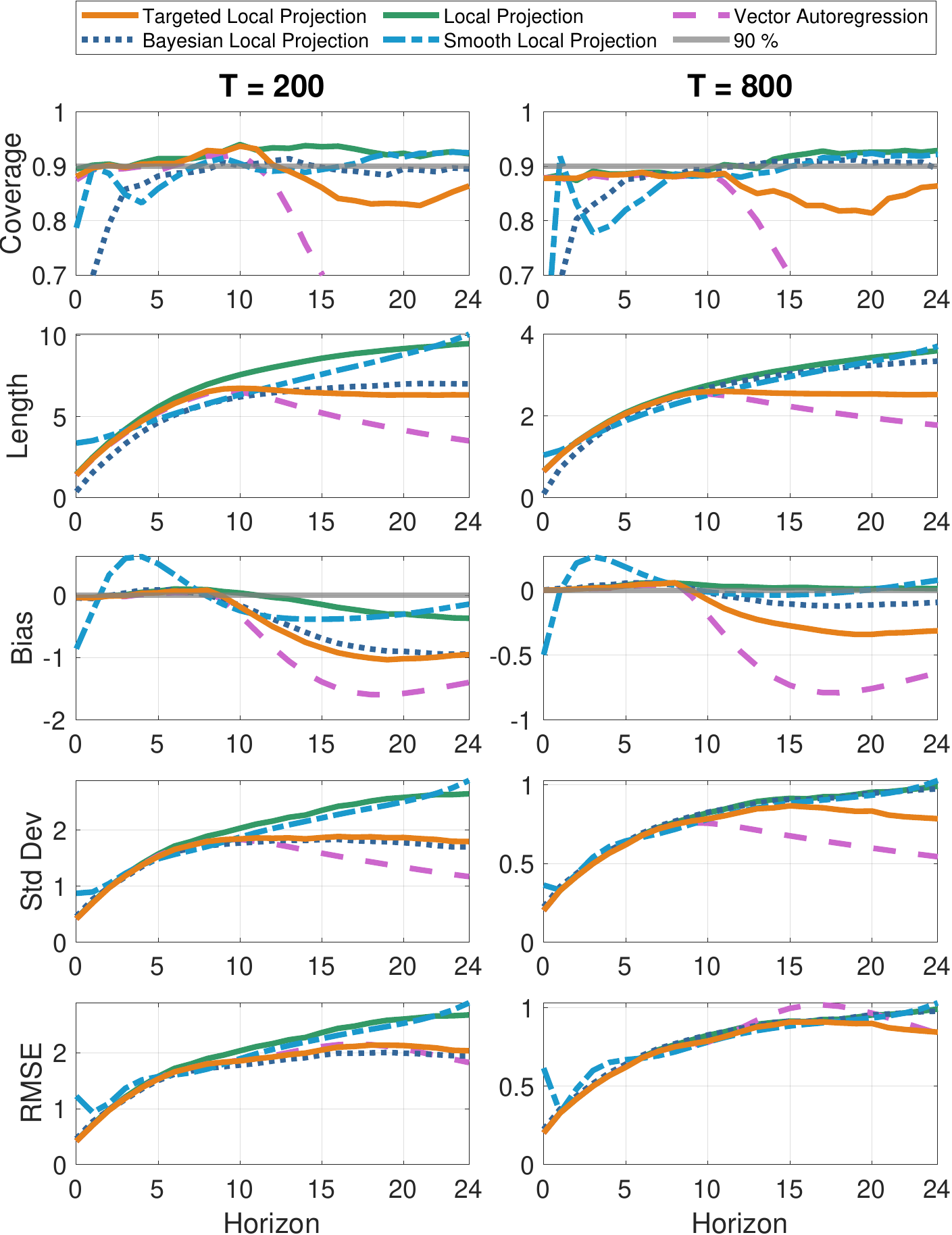}
    \caption{Coverage, length, bias, standard deviation, and RMSE for $T=200$ (left) and $T=800$ (right), DGP = VARMA(1,$100$), $\eta=2$, \textbf{GARCH(1,1) errors}. Methods: Targeted Local Projections (10,8), Local Projections (10), Vector Autoregression (8), Smooth Local Projections (10) and Bayesian Local Projections (8,8). Inference by MSDB for TLP, LP, VAR and SLP.}
    \label{fig:varma1garchBB2_metrics}
\end{figure}
\newpage
\begin{figure}[H]
    \centering
    \includegraphics[width=\linewidth, height=0.90\textheight, keepaspectratio]{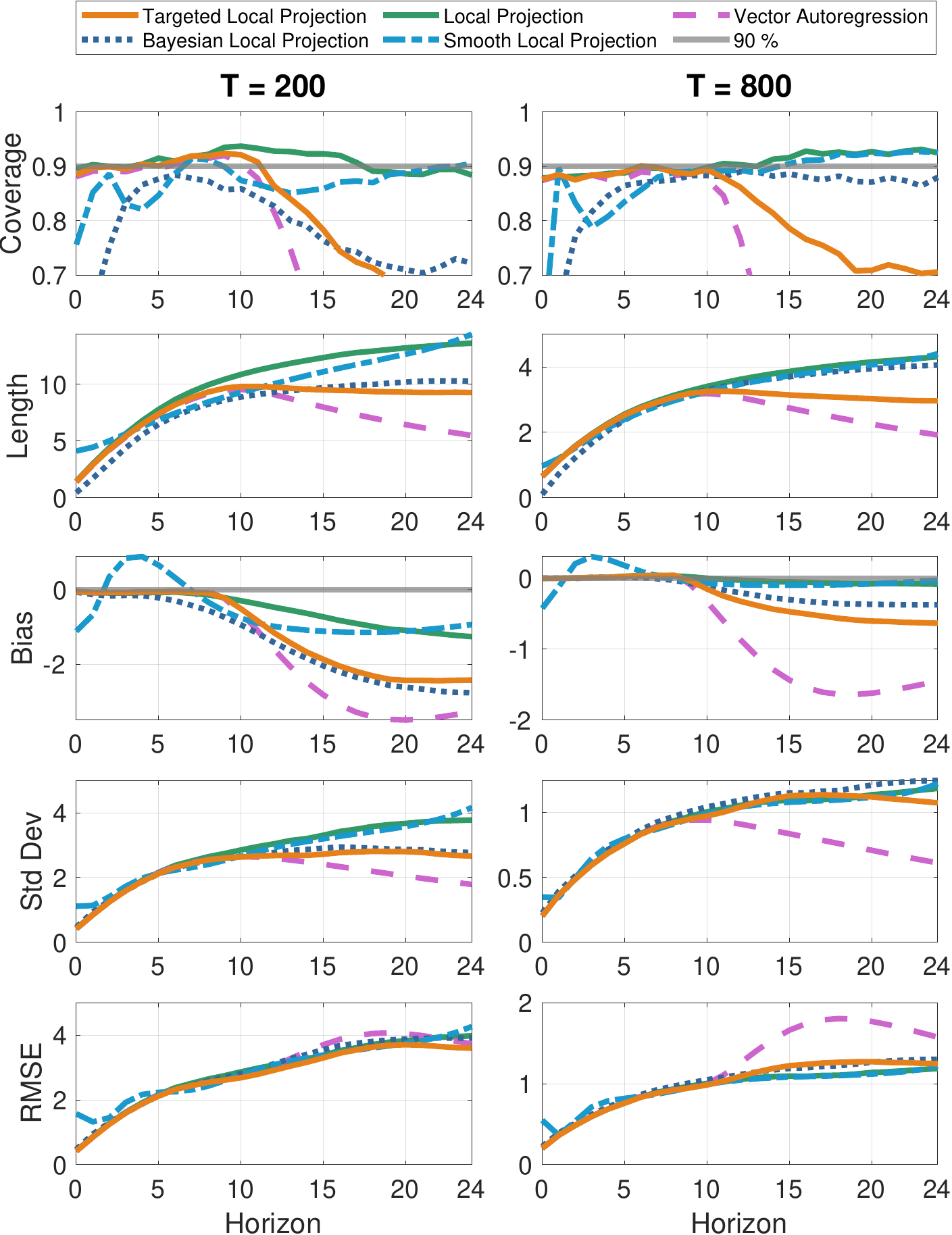}
    \caption{Coverage, length, bias, standard deviation, and RMSE for $T=200$ (left) and $T=800$ (right), DGP = VARMA(1,$100$), $\eta=4$, \textbf{GARCH(1,1) errors}. Methods: Targeted Local Projections (10,8), Local Projections (10), Vector Autoregression (8), Smooth Local Projections (10) and Bayesian Local Projections (8,8). Inference by MSDB for TLP, LP, VAR and SLP.}
    \label{fig:varma1garchBB4_metrics}
\end{figure}
\restoregeometry
\section{Additional empirical results}\label{app:empirics}
\renewcommand{\thefigure}{D\arabic{figure}}
\setcounter{figure}{0}
\begin{figure}[H]
\centering
\includegraphics[width=\linewidth, height=0.90\textheight, keepaspectratio]{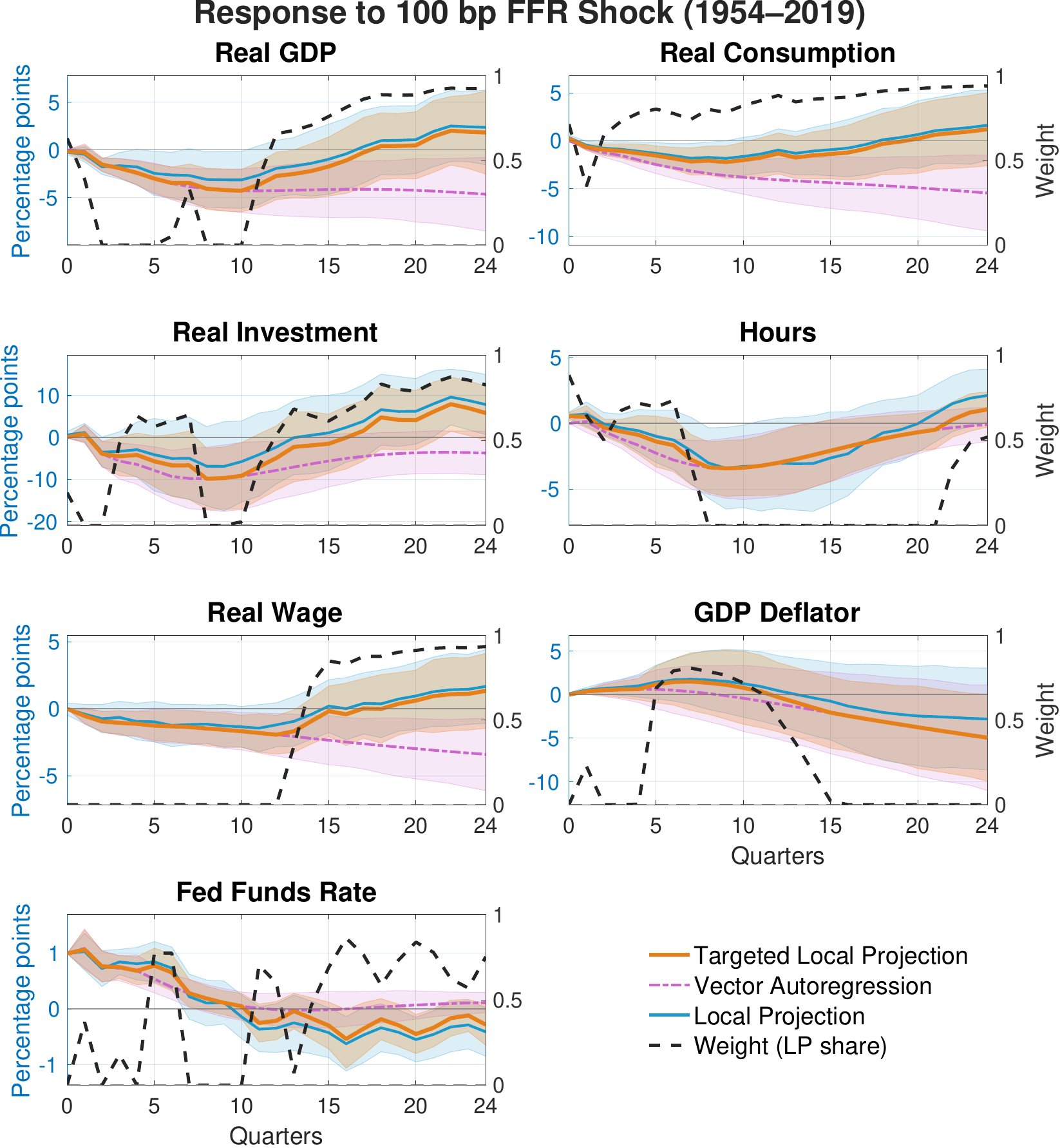}
\caption{Impulse responses to a 100 basis point shock to the federal funds rate (1954--2019). The panels display the responses of Real GDP, Real Consumption, Real Investment, Hours, Real Wage, GDP Deflator, and the federal funds rate. Shaded regions correspond to 90\% confidence intervals. Estimation uses $p=8$ and $q=4$ lags.}
\label{fig:fedfundsrate_irfs2}
\end{figure}
\newpage

\end{document}